\providecommand{\keywords}[1]{\textbf{\textit{Keywords~:}} #1}
\def\url#1{\expandafter\string\csname #1\endcsname}
\title{Optimizing the ecological connectivity of landscapes with generalized flow models and preprocessing}
\author[1,2]{Fran\c{c}ois Hamonic}%
\author[2]{C\'ecile Albert}%
\author[1]{Basile Cou\"etoux}%
\author[1]{Yann Vax\`es}%
\affil[1]{Aix-Marseille Univ, CNRS, Universit\'e de Toulon, LIS, Marseille, France}
\affil[2]{Aix-Marseille Univ, CNRS, Univ Avignon, IRD, IMBE, Marseille, France}
\affil[ ]{\textit{francois.hamonic@lis-lab.fr, cecile.albert@imbe.fr, basile.couetoux@lis-lab.fr, yann.vaxes@lis-lab.fr}}
\date{}
\begin{document}

\maketitle

\begin{abstract}
  In this article we consider the problem of optimizing the landscape connectivity under a budget constraint by improving habitat areas and ecological corridors between them. We model this problem as a discrete optimization problem over graphs in which vertices represent the habitat areas and arcs represent a probability of connection between two areas that depend on the quality of the respective corridor. We propose a new generalized flow model that improves existing models for this problem. {Then, following the approach of Catanzaro et al. \cite{PREPROCESS_CANTARAZO_2011} for the robust shortest path problem, we design an improved preprocessing algorithm that reduces the size of the graphs on which we compute generalized flows.} Reported numerical experiments highlight the benefits of both contributions that allow to solve larger instances of the problem. These experiments also show that several variants of greedy algorithms perform relatively well in practice while they return arbitrary bad solutions in the worst case.
\end{abstract}

\keywords{{Combinatorial optimization, environment and climate change, landscape connectivity, network flow, shortest path computation, mixed integer linear program.}}


\section{Introduction}
\subsection{Context and ecological motivation}
Habitat loss is a major cause of the rapid decline of biodiversity \cite{IPBES_2019}. Further than reducing the available resources, it also  increases the discontinuities among
small habitat areas called \textit{patches},
this is a phenomenon known under the name
\textit{habitat fragmentation} \cite{LANDSCAPE-FRAGMENTATION_JAEGER_2011}. While habitat loss tends to reduce the size of populations of animals and plants,
habitat fragmentation also makes it harder for organisms to move around in landscapes. This decreases the access to resources and the gene flow among populations.
\textit{Landscape connectivity}, defined as the degree to which the landscape facilitates the movement of organisms between habitat patches \cite{CONNECTIVITY_TAYLOR_1993}, then becomes of major importance for biodiversity and its conservation.
Accounting for landscape connectivity in restoration or conservation plans thus appears as a key solution to maximize the return on investment of the scarce financial support devoted to biodiversity conservation.

Graph-theoretical approaches are useful in modelling habitat connectivity \cite{GRAPH_URBAN_2001}. 
Indeed, a landscape can be viewed as a directed graph in which vertices are the habitat patches and each arc indicates a way for individuals to travel from one patch to another. The weight of a vertex represents its quality -- patch area is often used as a surrogate for quality -- and the weight of an arc mesures the difficulty for an individual to make the corresponding travel. This quantity is often approximated by a function of the border to border distance between the patches. Interestingly, this approach can be used for a variety of ecological systems like terrestrial (patches of forests in an agricultural area, networks of lakes or wetlands), riverine (patches are segments of river than can be separated by human constructions like dams that prevent fishes’ movement) or marine (patches can be reefs that are connected by flows of larvae transported by currents). With this formalism, ecologists have developed many connectivity indicators \cite{IIC_PASCUAL-HORTAL_2006, PC_SAURA_2007, REFF_MCRAE_2008} that aim to quantify the quality of a landscape with respect to the connections between its habitat patches.
\subsection{Indicators of landscape connectivity}
Among the proposed indicators, the Probability of Connectivity (PC) \cite{PC_SAURA_2007} and its derivative the Equivalent Connected Area (ECA) \cite{ECA_SAURA_2011} have received encouraging empirical support \cite{PC-SUPPORT_PEREIRA_2011,PC-SUPPORT_AWADE_2012,CONEFOR_SAURA_2009}. Given a graph $G=(V,A)$ with probability on edges $(\pi_a)_{a\in A}$ and weights on vertices $(w_v)_{v\in V}$,
$$ECA(G)=\sqrt{\sum_{s\in V} \sum_{t\in V} \left(w_s w_t \Pi_{st}\right)}$$
where $\Pi_{st}$ is the \textit{probability of connection} from patch $s$ to patch $t$
$$\Pi_{st}= \max_{P:\textnormal{$st$-path}} \prod_{a\in P} \pi_a\,.$$

{The following probabilistic analysis explains why this indicator has been called {\it Equivalent Connected Area} in \cite{ECA_SAURA_2011}. Let $\cal W$ be the area of a rectangle containing the landscape under study.} We consider a stochastic process that consists in choosing two points $p$ and $q$ uniformly at random in the rectangle. The indicator PC is the expected value of a random variable equal to $0$ if either $p$ or $q$  does not belong to a patch and $\Pi_{st}$ if $p$ belongs to $s$ and $q$ belongs to $t$ (recall that $\Pi_{st}=1$ if $s=t$). {Let $w_u$ denote the area of the patch $u.$} Since the probability that $p$ belongs to $u$ is $w_u/{\cal W}$ and the events $p \in s$ and $q\in t$ are independent, by linearity of expectation, $PC$ can be expressed as follows:
$$PC(G) = \frac{\sum_{s,t \in V} w_s w_t \Pi_{st}}{{\cal W}^2} = \frac{ECA(G)^2}{{\cal W}^2}\,.$$


\vspace*{-0.2cm}
\begin{figure}[H]
  \centering
  \subfloat[ecological landscape \cite{LANDSCAPE-CONNECTIVITY_RUDNICK_2012}]{
    \includegraphics[scale=0.44]{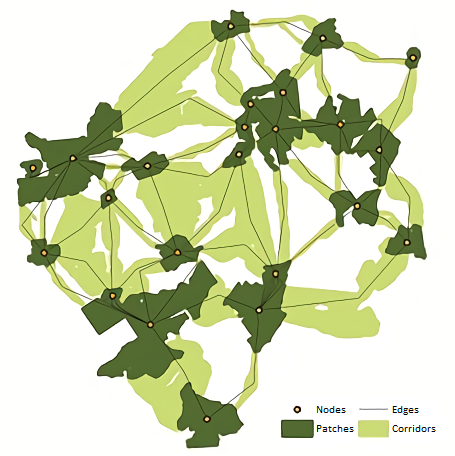}
  }
  \hspace*{1.2cm}
  \subfloat[graph representation]{
    \includegraphics[scale=0.2]{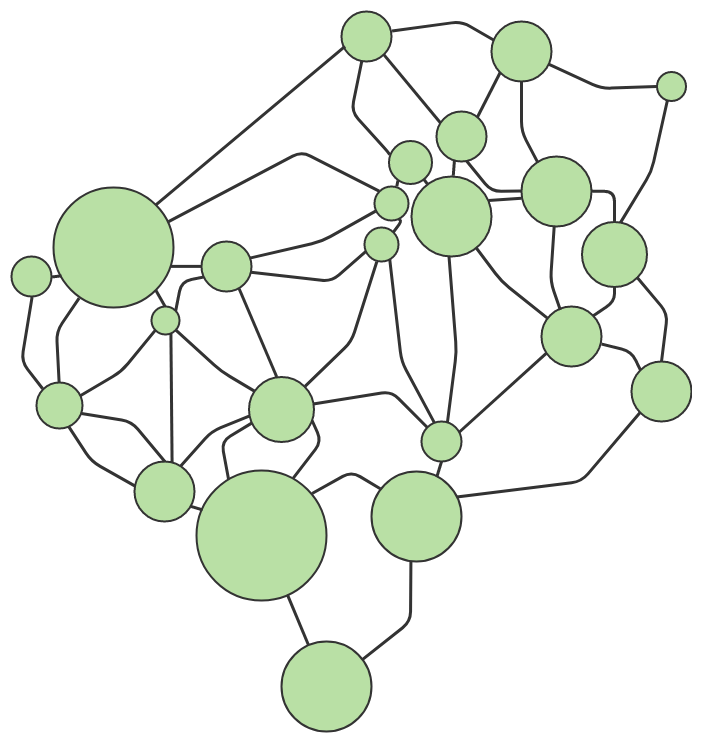}
  }
  \hspace*{0.5cm}
  \caption{Modelisation of an ecological landscape by a graph.}
\end{figure}

The equivalent connected area of a landscape is the area of a single patch whose PC value is equal the PC value of the original landscape. If the area of the patches and the landscape are normalized to make $\cal W$ equal to $1$ then PC is the square of ECA. Therefore, optimizing PC and optimizing ECA are equivalent problems. ECA is often considered by researchers interested in landscape connectivity because it represents an area, a more concrete quantity than the expected value of a random variable.




\subsection{{Optimizing ECA under a budget constraint}}

One key question for which landscape connectivity indicators have been used in the last years is to identify which elements of the landscape (habitat patches, corridors) should be preserved from destruction or restored in order to maintain a well-connected network of habitat under a given budget constraint \cite{FRAMEWORK_ALBERT_2017}. This translates into {identifying} the set of vertices or arcs that optimally maintains a good level of connectivity.
Many mathematical programming models have been introduced in the literature to help decision-makers to protect biodiversity. For a review of these models, we refer to the monography \cite{BILLIONNET-mono}, the survey article \cite{BILLIONNET2013514} and the references therein.
Here, we consider the budget-constrained ECA optimization problem (BC-ECA-Opt) that looks for the best combination of arcs
among a set $\Phi$
that could be restored or upgraded in order to maximize the ECA in the landscape within a limited budget:

{\bf Input:} a graph  $G=(V,A)$ with weights on vertices $(w_v)_{v\in V}$ and probabilities on the arcs $(\pi_a)_{a\in A}$, $\Phi\subseteq A$ a subset of arcs with improved probabilities $(\pi'_a)_{a\in \Phi}$ and costs $(c_a)_{a\in \Phi}$, and a budget $B\in \mathbb N$.

  {\bf Ouput:} A subset $S\subseteq \Phi$ such that $\sum_{a\in S} c_a\leq B$ maximizing $ECA(G(S))$ where $G(S)$ is the graph $G$ where we replace the value of $\pi_a$ by $\pi'_a$ for all $a\in S$.

Interestingly, this problem addresses both restoration and conservation cases. The restoration case starts from the current landscape and aims to restore a set of elements among the different feasible options. The conservation case starts with a landscape in which we search for the elements to protect among those that will be altered or destroyed if nothing is done.

\subsection{Previous works}

Until now, ecologists have mostly tackled this problem by ranking each conservation or restoration option by its independent contribution to ECA, i.e. the amount by which ECA varies if the option is purchased alone. Such an approach overlooks the cumulative effects of the decisions made like unnecessary redundancies or potential synergistic effects, e.g. {improving two consecutive corridors results in a greater increase in ECA than the sum of the increases achieved by improving each corridor independently}. This could lead to solutions that are more expensive or less beneficial to ECA than an optimal solution. {
    Some studies have tried to overcome this limitation by considering tuples of options \cite{MULTINODE_RUBIO_2015, MULTINODE_PEREIRA_2017}. In \cite{MULTINODE_RUBIO_2015}, the authors show that the brute force approach rapidly becomes impractical for landscape with more than 20 patches of habitat.
    Few studies have explored the search for an optimal solution but, in most cases, the underlying graph was acyclic (river dendritic networks).  For instance,  a polynomial time approximation scheme has been proposed when the underlying graph is a tree. Indeed, \cite{APPROX-TREES_WU_2014} describes a dynamical programming algorithm with rounding that computes,  for any $\epsilon,$ a $(1-\epsilon)$-approximated solution in time $n^8/\epsilon$ where $n$ is the number of nodes of the tree.}
More recently, \cite{PLNE_XUE_2017} has introduced a XOR sampling method based on a mixed integer formulation. To our knowledge, this is the unique linear programming formulation of BC-ECA-Opt already proposed. Solving optimally the problem with their mixed integer formulation does not scale to landscapes with few hundreds patches. 

To compute $ECA(G)$ we have to compute $\Pi_{st}$ for every pair of vertices $s,t\in V.$ If we consider the length function $l$ defined by $l_a = -\log \pi_a$ for $a\in A$ then the distance function $d$ induced by $l$ on $G$ verifies $d(s,t)=-\log(\Pi_{st})$ for $s,t\in V$. Thus our problem is closely related to solving shortest path problems on a graph with different length functions, namely one length function for each restoration plan (or {\it scenario}).  A preprocessing step has been proposed by Catanzaro et al. \cite{PREPROCESS_CANTARAZO_2011} to address such problems. It consists in identifying a subset of arcs that can either be removed or contracted to reduce the size of the graph considered. More formally, following \cite{PREPROCESS_CANTARAZO_2011}, we call an arc $(u,v)$ $t$-strong if it belongs to a shortest path from $u$ to $t$ in every possible scenario. Symetrically, we call an arc $(u,v)$ $t$-useless if there is no scenario such that $(u,v)$ is on a shortest path from $u$ to $t.$ Useless arcs were called $0$-persistent in \cite{PREPROCESS_CANTARAZO_2011}.  Since we need to specify a target vertex $t$ for which $(u,v)$ is useless, we didn't adopt the same terminology. In \cite{PREPROCESS_CANTARAZO_2011}, given an arc $(u,v)$ and a vertex $t$, the authors give a sufficient  (but not necessary) condition for $(u,v)$ to be $t$-strong and use it to design a $O(|A| + |V| \log |V|)$ algorithm that decides, in most cases, whether $(u,v)$ is $t$-strong. The authors also propose a $O(|A| + |V| \log |V|)$ algorithm to test whether a given arc $(u,v)$ is $t$-useless.

\subsection{Contribution}
The contribution of this article is twofold.
First, we propose a new mixed integer linear formulation which is more compact than the one of \cite{PLNE_XUE_2017}. Then, we design a new preprocessing algorithm that computes $t$-strong and $t$-useless arcs for all $t\in V.$


Our new mixed integer formulation is based on a generalized flow formulation (see for instance \cite{GENERALIZED-FLOW_AHUJA_1988}) instead of a standard network flow formulation. This leads to two improvements. Firstly, our formulation has a linear objective function as opposed to {the model of \cite{PLNE_XUE_2017}} that was using a piecewise constant approximation and additional binary variables to handle non-linearity.
Secondly, the new formulation aggregates into a single generalized flow the contribution to the connectivity of several source/sink pairs having the same source whereas the previous model treated every pair separately.
  {More precisely, our model uses $O(|V||A|)$ flow variables with $O(|V|^2)$ constraints, and $|\Phi|$ integer variables, while  the model of \cite{PLNE_XUE_2017} was using of $O(|V|^2|A|)$ flow variables with $O(|V|^3)$ constraints, and $O(|\Phi|+|V|^2)$ integer variables.} Recall that $\Phi$ is the set of elements that are threatened or could be restored.

  {The other contribution of this article is a preprocessing step that speeds-up the resolution of the problem by reducing the size of the directed graph on which a generalized flow to a particular sink $t$ have to be computed.
    Our algorithm improves the one of \cite{PREPROCESS_CANTARAZO_2011} in two ways. First, we replace the sufficient condition defined in \cite{PREPROCESS_CANTARAZO_2011} by a necessary and sufficient condition. This allows us to compute the entire set of $t$-strong arcs instead of a subset of them. We also improve the time complexity by a factor $|V|$ as we only need to run our algorithm on each arc instead of each pair of arc and vertex.}

\subsection{Organization of the article}

{Section \ref{section_new_model} is devoted to the description of our mixed integer formulation of the problem.
  In Section \ref{preprocessing} we first explain why the removal of $t$-useless arcs and the contraction of $t$-strong arcs does not modify the objective function of any restoration/conservation plan. Then, we design and analyze a $O(|A| + |V| \log |V|)$ algorithm that computes the set of all vertices $t$ for which a given arc $(u,v)$ is $t$-strong or the set of all vertices $t$ for which $(u,v)$ is $t$-useless.
  In section \ref{section_greedy_bad_cases}, we describe several greedy algorithms for BC-ECA-Opt and provide some instances where they compute solutions far from being optimal.
  Section \ref{experiments} compares our optimization approach to simple greedy algorithms in terms of running times and quality of the solutions found on a set of experimental cases. These experiments show that the preprocessing is very effective and that the greedy approach performs quite well on these instances.}

\section{An improved MIP formulation for BC-ECA-Opt\label{section_new_model}}

We decompose $ECA(G)$ as $\sum_{t\in V} w_t f_t$ where $f^{t}=\sum_{s \in V} w_s \cdot \Pi_{st}$.
We first show that $f^{t}$ can be expressed as the maximum quantity of generalized flow that can be sent to $t$ across the network if, for every patch $s \in V$, $w_s$ units of flow are available at $s$ and appropriate multipliers are chosen for each arc of the network. Recall that a generalized flow differs from a standard flow by the fact that each arc $a$ has a multiplier $\pi_a$ such that the quantity of flow leaving  arc $a$ is equal to the quantity of flow $\phi_a$ entering in $a$ multiplied by $\pi_a$ (see \cite{GENERALIZED-FLOW_AHUJA_1988} for an introduction to network flows).
For each vertex $u\in V,$ let $\delta^{\text{out}}_u$ be the set of arcs leaving $u$ and $\delta^{\text{in}}_u$  the set of arcs entering $u$.
As explained below, the linear program \hyperref[opt-P]{$\mathcal{(P)}$} has $f^t$ as optimum value.

\begin{optimisation_program}
  \name{\label{opt-P}{$\mathcal{(P)}$}}
  \sense{max}
  \objective{$z$}
  \constraints{
    \constraint{$\sum\limits_{a \in \delta^{\text{out}}_u} \phi_a - \sum\limits_{b \in \delta^{\text{in}}_u} \pi_b\cdot  \phi_b \leq w_u $} {$u \in V\setminus \{t\}$}
    \constraint{$\sum\limits_{a \in \delta^{\text{out}}_t} \phi_a - \sum\limits_{b \in \delta^{\text{in}}_t} \pi_b\cdot \phi_b = w_t - z$} {}
    \constraint{$\phi_a\ge 0$} {$a \in A$}
  }
\end{optimisation_program}

Constraints \hyperref[opt-P]{(A1)} require that the total quantity of flow leaving $u$ is at most $w_u$ plus the total quantify of flow entering $u$, i.e. the quantity of flow available in vertex $u$. Constraint \hyperref[opt-P]{(A2)} requires that $z$ is equal to the total quantity of flow entering $t$ plus $w_t$ minus the total quantity of flow leaving $t$. Finally, constraints \hyperref[opt-P]{(A3)} state that each arc carries a non negative quantity of flow from its source to its sink.

\begin{lemma}\label{flowOnShortestPaths}
  Any optimal solution of $\mathcal{(P)}$ is obtained by sending, for every vertex $s\in V-\{t\},$ $w_s$ units of flow from $s$ to $t$ along most reliable $st$-paths.
\end{lemma}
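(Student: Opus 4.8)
The plan is to show that the optimum of $\mathcal{(P)}$ equals $f^t=\sum_{s\in V}w_s\Pi_{st}$ and then to read the structure of its optimal solutions off the equality cases of the argument proving this. For the easy inequality, for every $s\in V\setminus\{t\}$ from which $t$ is reachable fix a most reliable $st$-path $P_s$ (which may be taken simple, all multipliers being $\le 1$; sources from which $t$ is unreachable contribute $0$ to $f^t$), route $w_s$ units of generalized flow from $s$ to $t$ along $P_s$, and let $\phi$ be the superposition of these path-flows. Since a unit entering $P_s$ at $s$ arrives at $t$ as $\prod_{a\in P_s}\pi_a=\Pi_{st}$ units and flow is conserved at every intermediate vertex, $\phi$ is feasible together with $z=w_t+\sum_{s\ne t}w_s\Pi_{st}=f^t$, so the optimum of $\mathcal{(P)}$ is at least $f^t$.

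For the reverse inequality I would use the potentials $y_u:=\Pi_{ut}$, noting $y_t=\Pi_{tt}=1$ and the monotonicity $\Pi_{pt}\ge \pi_a\,\Pi_{qt}$ for every arc $a=(p,q)$ (obtained by prepending $a$ to a most reliable $qt$-path). Given any feasible $(\phi,z)$, multiply constraint (A1) for each $u\ne t$ by $\Pi_{ut}\ge 0$, add constraint (A2), and regroup the resulting sum arc by arc: its left-hand side becomes $\sum_{a=(p,q)\in A}\phi_a\,(\Pi_{pt}-\pi_a\Pi_{qt})\ge 0$, while its right-hand side is $\sum_{u\ne t}w_u\Pi_{ut}+w_t-z=f^t-z$. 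Hence $z\le f^t$, so the optimum is exactly $f^t$ and the flow constructed above is optimal. (This amounts to weak LP duality with $y$ as a dual-feasible certificate; equivalently one may write down and solve the dual of $\mathcal{(P)}$.)

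It remains to describe an arbitrary optimal $(\phi,z)$. Since $z=f^t$, the inequalities in the previous paragraph are tight, which forces two things: $\phi_a>0$ implies $\Pi_{pt}=\pi_a\Pi_{qt}$, i.e. the support of $\phi$ lies in the subgraph $G^\star$ of arcs that belong to some most reliable path to $t$; and constraint (A1) is tight at every $u$ with $\Pi_{ut}>0$, i.e. such a vertex issues exactly $w_u$ net units of flow. After discarding the arcs with $\pi_a=0$ and the vertices from which $t$ is unreachable (they carry no flow that ever reaches $t$), every remaining vertex has $\Pi_{ut}>0$ and $G^\star$ is acyclic -- a directed cycle in $G^\star$ would multiply the potential around the cycle by $\prod_a\pi_a=1$, forcing each of its arcs to have multiplier $1$, a situation we may assume away. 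Decomposing $\phi$ on the acyclic $G^\star$ in topological order toward $t$ then writes it as a superposition of path-flows, each going from some vertex $u$ to the unique sink $t$ and all lying in $G^\star$, with the total amount issued at $u$ equal to its net generation $w_u$; equivalently, $\phi$ sends $w_s$ units from each $s\ne t$ to $t$ along most reliable $st$-paths, which is the claim.

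The step I expect to be the main obstacle is this last one, specifically handling the degenerate multipliers: arcs with $\pi_a\in\{0,1\}$ allow optimal solutions carrying flow that never reaches $t$, or harmless circulation on cycles whose arcs all have multiplier $1$, so the statement holds verbatim only after the normalization above (deleting zero-multiplier arcs and unreachable vertices); once $G^\star$ is known to be acyclic, the flow decomposition is routine.
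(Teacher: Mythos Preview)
Your argument is correct and more careful than the paper's, but it proceeds by a genuinely different route. The paper gives a two-line exchange argument: take any optimal flow, implicitly decompose it into path-flows, and observe that if a positive amount $\epsilon$ is routed from some $s$ to $t$ along a path $P'$ that is not most reliable, rerouting those $\epsilon$ units onto a most reliable $st$-path $P$ keeps feasibility and strictly increases the amount arriving at $t$, contradicting optimality. You instead compute the optimal value $f^t$ first via LP duality (the potentials $y_u=\Pi_{ut}$ are exactly a dual-feasible certificate), and then derive the structure of optimal primal solutions from complementary slackness, followed by a flow decomposition on the acyclic subgraph $G^\star$ of arcs tight for the dual. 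What the paper's approach buys is brevity and elementarity; what yours buys is rigour and completeness: you make explicit both the value and the structural claims, and you correctly isolate the degenerate cases ($\pi_a\in\{0,1\}$, vertices that cannot reach $t$) in which the lemma's statement is literally false because optimal solutions may carry harmless circulation or flow that never reaches $t$---issues the paper's exchange argument glosses over by presupposing a path decomposition.
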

\begin{proof}
  First notice that the quantity of flow arriving at $t$ when $w_s$ units flow are sent from $s$ to $t$ along an $st$-path $P$ is $w_s$ times the probability of path $P.$ Let $\phi'$ be an optimal solution of $\mathcal{(P)}$ maximizing the quantity of flow routed along a path which is not a most reliable path. Suppose, by contradiction, that $\phi'$ sends $\epsilon>0$ units of flow along an $st$-path $P'$ whose probability is smaller than the probability of a most reliable $st$-path $P.$ Let $\phi$ be the flow obtained from $\phi'$ by decreasing the flow sent on path $P'$ by $\epsilon$ and increasing the flow sent on path $P$ by $\epsilon.$ Since the probability of $P$ is larger than the probability of $P',$ $\phi$ sends more flow to $t$ than $\phi',$ leading to a contradiction with the choice of $\phi'.$
\end{proof}

\begin{corollary}\label{ft-definition}
  The optimal value of $\mathcal{(P)}$ is $f^t = \sum_{s \in V} w_s \cdot \Pi_{st}$.
\end{corollary}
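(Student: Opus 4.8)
The plan is to read the optimal value off directly from the structural characterization of optimal solutions given by Lemma~\ref{flowOnShortestPaths}, once we know that an optimum is attained.

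\emph{Step 1: an optimum exists.} First I would note that the zero flow together with $z=w_t$ satisfies \hyperref[opt-P]{(A1)}--\hyperref[opt-P]{(A3)}, so $\mathcal{(P)}$ is feasible. For boundedness of the objective, I would sum constraint \hyperref[opt-P]{(A1)} over all $u\in V\setminus\{t\}$ and add constraint \hyperref[opt-P]{(A2)}: the left-hand side telescopes to $\sum_{a\in A}(1-\pi_a)\phi_a$, which is nonnegative since every $\pi_a\le 1$, yielding $z\le\sum_{u\in V}w_u$. Hence $\mathcal{(P)}$ attains its optimum and Lemma~\ref{flowOnShortestPaths} applies to it.

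\emph{Step 2: evaluate the optimum.} By Lemma~\ref{flowOnShortestPaths}, some optimal solution $\phi^\star$ routes, for every $s\in V\setminus\{t\}$, exactly $w_s$ units of flow from $s$ to $t$ along a most reliable $st$-path $P_s$ (and nothing when no $st$-path exists). I would check that this $\phi^\star$ is feasible: across any internal vertex of a path $P_s$ the generalized-flow balance ``outflow $=$ weighted inflow'' is preserved under superposition of the single-path routings, so the only net flow created at a vertex $u\ne t$ is the $w_u$ units that $u$ injects as the origin of $P_u$; thus \hyperref[opt-P]{(A1)} holds with equality and \hyperref[opt-P]{(A3)} is immediate. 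Then I would compute the value: routing $w_s$ units along $P_s$ delivers $w_s\prod_{a\in P_s}\pi_a=w_s\,\Pi_{st}$ units to $t$ (which is $0=w_s\Pi_{st}$ if no $st$-path exists), and $\phi^\star$ sends nothing out of $t$, so \hyperref[opt-P]{(A2)} reads $-\sum_{s\ne t}w_s\,\Pi_{st}=w_t-z$, i.e. $z=w_t+\sum_{s\ne t}w_s\,\Pi_{st}=\sum_{s\in V}w_s\,\Pi_{st}$ since $\Pi_{tt}=1$. This equals $f^t$ by definition.

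I do not expect a genuine obstacle here, the corollary being essentially a computation on top of Lemma~\ref{flowOnShortestPaths}. The only points needing a little care are the feasibility bookkeeping in Step 2 — spelling out that superposing the single-path routings preserves the weighted balance at internal vertices — and the routine boundedness remark in Step 1, which is what ensures Lemma~\ref{flowOnShortestPaths} actually has an optimum to describe.
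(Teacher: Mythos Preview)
Your proposal is correct and follows essentially the same route as the paper: invoke Lemma~\ref{flowOnShortestPaths} to see that an optimal solution routes $w_s$ units along most reliable $st$-paths, then read off $z$ from constraint \hyperref[opt-P]{(A2)} using that no flow leaves $t$. You add some rigor the paper omits --- the boundedness argument in Step~1 and the explicit feasibility/no-path bookkeeping in Step~2 --- but the core computation is identical.
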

\begin{proof}
  Since the objective is to maximize $z,$ no flow leaves $t$ in any optimal solution of $\mathcal{(P)},$ i.e. $\sum\limits_{a \in \delta^{\text{out}}_t} \phi_a=0.$  Constraint \hyperref[opt-P]{(A2)} ensures that $z$ is the quantity of flow received by $t$ plus $w_t.$ By Lemma \ref{flowOnShortestPaths}, there exits an optimal solution $\phi$ such that every vertex $s$ distinct from $t$ send $w_s$ units of flow on a most reliable $st$-path. Hence, for every vertex $s$ distinct from $t$, the flow received by $t$ from $s$ is $w_s\cdot \Pi_{st}$ and thus the value of $z$ is $\sum_{s \in V} w_s \cdot \Pi_{st}.$
\end{proof}

With this linear programming definition of $f^t,$ we can now present our MIP program of BC-ECA-Opt. For each arc $a=(u,v) \in \Phi,$ we add another arc $a'=(u,v)$ of probability $\pi_{a'}=\pi'_a$ that can be viewed as an improved copy of arc $a$. This improved copy can be used only if the improvement of arc $a$ is purchased. For each arc $a\in \Phi,$ $x_a$ is equal to one if the improvement of arc $a$ is purchased  and zero otherwise. We denote by $\Psi$ the set of improved copies of arcs in $\Phi.$   In the following MIP program, $\delta^{\text{out}}_u$ and $\delta^{\text{in}}_u$ are defined with respect to the set of arcs $A'=A\cup \Psi.$

\begin{optimisation_program}
  \name{BC-ECA-Opt\label{bc-eca-opt}}
  \sense{max}
  \objective{$\sum\limits_{t\in V} w_t \cdot f^t$}
  \constraints{
    \constraint{$\sum\limits_{a \in \delta^{\text{out}}_u} \phi^t_a - \sum\limits_{b \in \delta^{\text{in}}_u} \pi_b\cdot  \phi^t_b \leq w_u $} {$t\in V$, $u \in V\setminus \{t\}$}
    \constraint{$\sum\limits_{a \in \delta^{\text{out}}_t} \phi^t_a - \sum\limits_{b \in \delta^{\text{in}}_t} \pi_b\cdot  \phi^t_b =  w_t - f_t$} {$t\in V$}
    \constraint{$\phi^{t}_{a'} \le x_a\cdot M_a$} {$t\in V$, $a \in \Phi$}
    \constraint{$\sum\limits_{a \in F} c_a \cdot x_a \le B $} {}
    \constraint{$x_a\in \{0,1\}$} {$a \in \Phi$}
    \constraint{$\phi^{t}_a\geq 0$} {$t \in V$, $a \in \Phi$}
  }
\end{optimisation_program}

Constraints of \hyperref[bc-eca-opt]{(B1)} and \hyperref[bc-eca-opt]{(B2)} are simply constraints of \hyperref[opt-P]{(A1)} and \hyperref[opt-P]{(A2)} for all possible target vertex $t$. For each arc $a\in \Phi,$ the big-M constraints \hyperref[bc-eca-opt]{(B3)} ensure that if the improvement of arc $a$ is not purchased, i.e. $x_a = 0,$ then the flow on arc $a'$ is null. The constant $M_a$ is an upper bound of the flow value on arc $a.$ We could simply take $M_a = \sum_{u\in V} w_u$ for all $a\in \Phi$ but more precise estimations are possible for a better linear relaxation and thus a faster resolution of the MIP program. Constraint \hyperref[bc-eca-opt]{(B4)} ensures that the total cost of the improvements is at most $B.$ This MIP program has $O(|V|(|A|+|\Phi|))$ flow variables, $|\Phi|$ binary variables and $O(|V|^2+|V||\Phi|)$ constraints.

\subsection{Extension to patch improvements} \label{mip_extensions}

{Here, we explain how to extend our model to the version with patch improvements where it is possible to increase the weight of a vertex $u$ from $w_u$ to $w^+_u$ at cost $c_u.$} For that, we process all the occurrences of $w_u$ as follows. Let $y_u$ be a binary variable equal to $1$ if the vertex $u$ is improved and $0$ otherwise. We add a term $y_u c_u$ in the budget constraint for every vertex $u$ in the set $W$ of vertices that can be improved. When $w_u$ appears as an additive constant, we simply replace it by $w_u + y_u (w^+_u-w_u)$. Note that $w_u$ only appears as a coefficient in the objective function in the form $w_u f_t$. In this case, to avoid a quadratic term, we use a standard McCormick linearization \cite{DBLP:journals/mp/McCormick76}. We replace the product $w_u f_t$ by $w_u f_t + (w^+_u-w_u) f_t'$ where $f_t'$ is a new variable that is equal to $f_t$ if $y_u=1$ and $0$ otherwise. To achieve this values of $f_t'$, for all $u\in V,$ we add the constraints $f_t' \le f_t$ and $f_t' \le y_u M$ where $M$ is larger than any values of $f_t$. As it is a maximization program and $f_t'$ appears with a positive coefficient in the objective function, the first constraint guaranties that $f_t' = f_t$ if $y_u = 1$ in any optimal solution and the second constraint guaranties that $f_t' = 0$ if $y_u = 0$.

\section{Preprocessing \label{preprocessing}}

The size of the mixed integer programming formulation of BC-ECA-Opt given in Section \ref{section_new_model} grows quadratically with the size of the graph that represents the landscape. In this section, we describe a preprocessing step that reduces the size of this graph.
{For that, we adopt the approach used by Catanzaro et al. \cite{PREPROCESS_CANTARAZO_2011} for the robust shortest path problem.
We introduce a notion of \textit{strongness} of an arc with respect to a target vertex $t \in V$. We call a solution $x\in \{0,1\}^{\Phi}$ of BC-ECA-Opt a {\it scenario} and say that the distances are {\it computed under scenario $x$} when the length of every $a\in A$ is $l^-_a$ if $x_a=1$ and $l_a$ otherwise. We denote $d_x(s,t)$ the distance between $s$ and $t$ when the arc lengths are set according to the scenario $x.$
Following \cite{PREPROCESS_CANTARAZO_2011}, an arc $(u,v)$ is said to be \textit{$t$-strong} if, for every scenario $x \in \{0,1\}^\Phi$, $(u,v)$ belongs to a shortest path from $u$ to $t,$ i.e. $d_x(u,t) = d_x(u,v) + d_x(v,t)$ for every $x\in\{0,1\}^\Phi$. An arc $(u,v)$ is said to be \textit{$t$-useless} if, for every scenario $x \in \{0,1\}^\Phi$, arc $(u,v)$ does not belong to any shortest $ut$-path when arc lengths are set according to the scenario $x.$ Useless arcs were called $0$-persistent in \cite{PREPROCESS_CANTARAZO_2011} but here we need to specify the target $t$. We denote by $S(t)$ the set of arcs $(u,v)$ such that $(u,v)$ is $t$-strong and by $W(t)$ the set of arcs $(u,v)$ such that $(u,v)$ is $t$-useless.}
In \cite{PREPROCESS_CANTARAZO_2011}, given an arc $(u,v)$ and a vertex $t$, the authors identify a sufficient  (but not necessary) condition for $(u,v)\in S(t)$ and use it to design a $O(|A| + |V| \log |V|)$ algorithm that can, in most cases, identify if $(u,v)\in S(t)$. The authors also propose a $O(|A| + |V| \log |V|)$ algorithm to test whether a given arc $(u,v)$ belongs to $W(s)$ or not.
In Section \ref{ComputeSt}, given an arc $(u,v)$ of $G$, we show how to adapt the Dijkstra's algorithm to compute in $O(|A| + |V| \log |V|)$ the set of all vertices $t$ such that $(u,v) \in S(t)$ or the set of all vertices $t$ such that $(u,v) \in W(t)$. This improves the results of \cite{PREPROCESS_CANTARAZO_2011} by providing a necessary and sufficient condition for $(u,v)$ to be $t$-strong, i.e. we compute the entire set $S(t)$ while the algorithm of \cite{PREPROCESS_CANTARAZO_2011} computes a subset of $S(t).$ It also reduces the time complexity for computing $S(t)$ and $W(t)$ for all $t$ by a factor $|V|$ as we only need to run the algorithms on each arc instead of each pair of arc and vertex.
Then, we explain how the knowledge of $S(t)$ and $W(t)$ for all $t$ can be used to define a smaller equivalent instance of the problem.

\subsection{Computing $S(t)$ for all $t$}\label{ComputeSt}

Given a scenario $x\in \{0,1\}^\Phi,$ the fiber $F_x(u,v)$ of arc $(u,v)\in A$ is the set of vertex $t$ such that $(u,v)$ belongs to a shortest path from $u$ to $t$ when arc lengths are set according to $x,$ i.e.

$$F_x(u,v)=\{t\in V : d_x(u,t)= l_x(u,v)+d_x(v,t)\}$$

Let $F(u,v)$ be the intersection of the fibers of $(u,v)\in A$ over all possible scenarios $x\in \{0,1\}^\Phi,$ i.e. $F(u,v):=\bigcap_{x} F_x(u,v).$ By definition, an arc $(u,v)$ is $t$-strong if $t$ belongs to $F_x(u,v)$ for every scenario $x$ i.e.
$$S(t)=\left\{(u,v) : t\in F(u,v)\right\}$$
In order to compute every $S(t)$, we first compute $F(u,v)$ for every arc $(u,v)\in A$ and then we transpose the representation to get $S(t)$ for every vertex $t\in V$.
Let $y\in \{0,1\}^\Phi$ be the following scenario :
\begin{equation}\label{scenario-definition}
  y_{wt}=\left\{\begin{array}{ll} 0 & w\in  F(u,v) \textbf{or} (w,t)=(u,v)\\ 1 &  w\notin  F(u,v) \end{array} \right.
\end{equation}
\begin{lemma}\label{F-inter}
  The intersection $F(u,v)$ of the fibers of $(u,v)\in A$ over all possible scenarios is the fiber of $(u,v)$ under the scenario $y,$ i.e. $F(u,v) = F_y(u,v).$
\end{lemma}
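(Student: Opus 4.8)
The plan is to establish the two inclusions $F(u,v)\subseteq F_y(u,v)$ and $F_y(u,v)\subseteq F(u,v)$. The first is immediate, since $y$ is one particular scenario while $F(u,v)$ is the intersection of \emph{all} fibers $F_x(u,v)$. The content of the lemma is the reverse inclusion, equivalently that $F_y(u,v)\subseteq F_x(u,v)$ for every scenario $x$. I would prove this by contradiction: suppose $F_y(u,v)\setminus F(u,v)\neq\emptyset$ and pick $t$ in this set minimizing $d_y(v,t)$; this is finite, because otherwise $t$ would be unreachable from $v$, hence also from $u$, in every scenario, and then $t$ would lie in $F_x(u,v)$ for all $x$, contradicting $t\notin F(u,v)$. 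Since $t\notin F(u,v)$, fix a scenario $x^{*}$ with $t\notin F_{x^{*}}(u,v)$, so that $d_{x^{*}}(u,t)<l_{x^{*}}(u,v)+d_{x^{*}}(v,t)$.

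Two elementary observations about fibers are needed. (i) If $t'\in F_y(u,v)$ and $Q$ is a shortest $v$-$t'$ path under $y$, then every vertex $q$ of $Q$ is in $F_y(u,v)$: combining $d_y(v,t')=d_y(v,q)+d_y(q,t')$, the defining equality $d_y(u,t')=l_y(u,v)+d_y(v,t')$, and the triangle inequality $d_y(u,q)+d_y(q,t')\ge d_y(u,t')$ forces $d_y(u,q)=l_y(u,v)+d_y(v,q)$. (ii) For any scenario $x$, if $t'\notin F_x(u,v)$ then no vertex of a shortest $u$-$t'$ path under $x$ lies in $F_x(u,v)$: if $w$ were such a vertex, then $d_x(u,t')=d_x(u,w)+d_x(w,t')=l_x(u,v)+d_x(v,w)+d_x(w,t')\ge l_x(u,v)+d_x(v,t')$, and since the reverse inequality always holds we would get $t'\in F_x(u,v)$.

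Now I would combine these. Let $Q$ be a shortest $v$-$t$ path under $y$. By (i) all its vertices lie in $F_y(u,v)$, and every vertex of $Q$ distinct from $t$ is strictly closer to $v$ than $t$ is, so by the minimal choice of $t$ it lies in $F(u,v)$. Hence the tail of every arc of $Q$ lies in $F(u,v)$, so by the definition of $y$ no arc of $Q$ is improved under $y$; since the original length of an arc is its largest length over all scenarios, $l_x(Q)\le l_y(Q)=d_y(v,t)$ for every $x$ (writing $l_x(\cdot)$ also for the length of a path under $x$). In particular $d_{x^{*}}(v,t)\le l_{x^{*}}(Q)\le d_y(v,t)$, and since moreover $l_{x^{*}}(u,v)\le l_y(u,v)$ (because $y$ does not improve $(u,v)$) we obtain
\[
d_{x^{*}}(u,t)\;<\;l_{x^{*}}(u,v)+d_{x^{*}}(v,t)\;\le\;l_y(u,v)+d_y(v,t)\;=\;d_y(u,t),
\]
the last equality being the membership $t\in F_y(u,v)$. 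To finish, let $P^{*}$ be a shortest $u$-$t$ path under $x^{*}$. Since $t\notin F_{x^{*}}(u,v)$, $P^{*}$ does not use $(u,v)$, and by (ii) it avoids every vertex of $F_{x^{*}}(u,v)\supseteq F(u,v)$. Thus no arc of $P^{*}$ equals $(u,v)$ and none has its tail in $F(u,v)$, so by the definition of $y$ every arc of $P^{*}$ that lies in $\Phi$ is improved under $y$; therefore $l_y(P^{*})\le l_{x^{*}}(P^{*})=d_{x^{*}}(u,t)<d_y(u,t)$, contradicting $d_y(u,t)\le l_y(P^{*})$.

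The crux, and the main obstacle, is to control how distances change when the witnessing scenario $x^{*}$ is replaced by the single scenario $y$: the detour $Q$ realizing $d_y(v,t)$ must not get shorter while the competitor path $P^{*}$ must not get longer. This is exactly what the design of $y$ (keep the original long lengths on the arcs leaving $F(u,v)$, improve all the other arcs) together with the minimal choice of $t$ (which forces $Q$ to stay inside $F(u,v)$) is meant to achieve. A routine technical point is that a few degenerate situations caused by zero-length arcs (for instance $t=v$, a vertex of $Q$ tied with $t$ in $y$-distance from $v$, or $u$ appearing on $Q$) must be dealt with separately; they are disposed of by the same arguments, or avoided altogether by assuming all arc lengths $l_a=-\log\pi_a$ are positive, i.e. $\pi_a<1$.
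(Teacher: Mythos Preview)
Your proof is correct and follows essentially the same approach as the paper's. Both proofs establish the nontrivial inclusion $F_y(u,v)\subseteq F_x(u,v)$ by contradiction: pick a counterexample $t$ closest to $u$ (equivalently, to $v$) under $y$, observe that a shortest $ut$-path through $(u,v)$ under $y$ stays inside $F(u,v)$ (so its arcs carry their maximal lengths under $y$), observe that a shortest $ut$-path under the bad scenario avoids $F(u,v)$ (so its arcs carry their minimal lengths under $y$), and compare the two to force $t\in F_x(u,v)$. Your observations (i) and (ii) make explicit what the paper leaves implicit, and your acknowledgment of the zero-length edge cases is more careful than the paper's treatment, but the underlying argument is the same.
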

\begin{proof}
  Since the inclusion $F(u,v)\subseteq F_y(u,v)$ is obvious, it suffices to prove that $F_y(u,v)\subseteq F_x(u,v)$ for any scenario $x\in \{0,1\}^\Phi.$ By way of contradiction, suppose that $x$ is a scenario such that $F_y(u,v) \setminus F_x(u,v)$ contains a vertex $t$ at minimum distance from $u$ in the scenario $y.$ Let $P$ be a shortest $ut$-path containing $(u,v)$ under the scenario $y.$ For every vertex $w$ of $P,$ $d_y(u,w)\le d_y(u,t)$ and $w\in F_y(u,v).$ Hence, by the choice of $t$, $w$ belongs to $F_x(u,v)$ for every scenario $x.$ Therefore $w$ belongs to $F(u,v)$ and the length of every arc of $P$ is set to its upper bound in the scenario $y,$ i.e. $l_y(P)=l(P).$ Now, let $Q$ be a shortest $ut$-path in the scenario $x.$ If the path $Q$ contains a vertex $z \in F(u,v)$ then $d_x(u,t)=d_x(u,z)+d_x(z,t)=l_x(u,v)+d_x(v,z)+d_x(z,t)= l_x(u,v)+d_x(v,t),$ a contradiction with $t\notin F_x(u,v).$  Therefore, the vertices of $Q$ do not belong to $F(u,v)$ and thus their lengths are set to their lower bound in $y,$ i.e. $l(Q)=l^-(Q).$ We deduce that $l_x(P) \le l_y(P) \le l_y(Q) \le l_x(Q)$, which contradicts $t\notin F_x(u,v).$
\end{proof}

We describe a $O(|A| + |V| \log |V|)$ time algorithm that, given an arc $(u,v)$, computes simultaneously the scenario $y$ defined by \eqref{scenario-definition} and the fiber $F_y(u,v)$ of arc $(u,v)$ with respect to this scenario. The algorithm is an adaptation of the Dijkstra's shortest path algorithm that assigns colors to vertices. We prove that it colors a vertex $w$ in blue if $w$ belongs to $F_y(u,v)$ and in red otherwise. At each step, we consider a subset of vertices $S\subseteq V$ whose colors have been already computed.
Before the first iteration, $S=\{u\}$ and the length of every arc leaving $u$ is set to its lower bound except the length of $(u,v)$ which is set to its upper bound according to scenario $y.$  At each step, since the color of every vertex of $S$ is known, the length, under scenario $y,$ of every arc $(w,t)$ with $w\in S$ is also known. Therefore, it is possible to find a vertex $t\in V-S$ at minimum distance from $u,$ under scenario $y,$ in the subgraph $G[S\cup\{t\}]$ induced by $S\cup\{t\}.$ Following Dijkstra's algorithm analysis, we know that the distance under scenario $y$ from $u$ to $t$ in $G[S\cup\{t\}]$ is in fact the distance between $u$ and $t$ in $G$. This allows us to determine if there exists a shortest path from $u$ to $t$ in the scenario $y$ passing via $(u,v)$ and to color the vertex $t$ accordingly.  We end-up the iteration with a new vertex $t$ whose color is known and that can be added to $S$ before starting the next iteration. The algorithm terminates when $S=V.$

\medskip

\begin{algorithm}[H]
  \caption{Computes the set $S$ of vertices $t$ such that $(u,v)$ is $t$-strong\label{algo-strong}}
  \SetKwInOut{Input}{Input}
  \SetKwInOut{Output}{Output}
  \Input{$G=(V, A, l, l^-)$, $(u,v) \in A$}
  \Output{$\{t \in V : (u,v) \text{is} t\text{-strong}\}$}
  \ForEach{$(u,w) \in \delta^{\text{out}}_u \setminus \{(u,v)\}$} {
  $d(w) \gets l^-_{uw}$\\
  $\gamma(w) \gets red$
  }
  $d(v) \gets l_{uv}$;
  $\gamma(v) \gets blue$\\
  $S \gets \{u\}$;
  $\gamma(u)\gets red$\\

  \While{$S \neq V$} {
  Pick $t \in V-S$ with smallest $d(t)$ breaking tie by choosing a vertex $t$ such that $\gamma(t)=blue$ if it exists\\
  \If{$\gamma(t)$ is $blue$}{
  \ForEach{$(t,w) \in \delta^{\text{out}}_t$ such that $d(w) \ge d(t)+l_{tw}$} {
  $d(w) \gets d(t)+l_{tw}$\\
  $\gamma(w) \gets blue$
  }
  }
  \Else{
  \ForEach{$(t,w) \in \delta^{\text{out}}_t$ such that $d(w) > d(t)+l^-_{tw}$} {
  $d(w) \gets d(t)+l^-_{tw}$\\
  $\gamma(w) \gets red$
  }
  }
  $S \gets S \cup \{t\}$
  }
  \Return $\{t\in V : \gamma(t)=blue\}$
\end{algorithm}

For every vertex $w\in V-S$,  the estimated distance $d(w)$ is the length of a $uw$-path under the scenario $y$ in the subgraph $G[S\cup \{w\}].$ An arc $a$ is blue if $a=uv$ or if its origin is blue, the other arcs are red, i.e. $a$ is blue if $y_a=0$ and red if $y_a=1.$ The estimated color $\gamma(w)$ of $w$ is blue if there exists a blue $uw$-path of length $d(w)$ in $G[S\cup \{w\}]$ and red otherwise.  The correctness of Algorithm \ref{algo-strong} follows from the following Lemma.
\begin{lemma}\label{correctness-Algo1}
  For every vertex $t\in S, $ $\gamma(t)$ is blue if and only if $t\in F_y(u,v).$
\end{lemma}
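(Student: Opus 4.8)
The plan is to prove the equivalence by induction on the order in which vertices are added to $S$ by Algorithm \ref{algo-strong}, exploiting the correctness of the underlying Dijkstra computation: at the moment a vertex $t$ is removed from $V-S$, its estimated distance $d(t)$ equals $d_y(u,t)$, the true distance under scenario $y$. This is the standard Dijkstra invariant, and it holds here because the arc lengths leaving any vertex of $S$ are already determined by the colors of those vertices, so $G[S\cup\{t\}]$ has the correct arc lengths along every path from $u$ to $t$ that stays in $S$ until the last step. I would state this as a preliminary observation and then focus on the color invariant.

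First I would set up the induction. The base case is $S=\{u\}$ together with the first vertex extracted; here $d(v)=l_{uv}=l_y(u,v)$ and $v$ is colored blue, while every other out-neighbor $w$ of $u$ has $d(w)=l^-_{uw}=l_y(u,w)$ and is colored red, matching the definition of a blue $uw$-path (the only candidate blue path from $u$ of length one is the arc $(u,v)$ itself). For the inductive step, suppose the claim holds for all vertices currently in $S$, and let $t$ be the next vertex picked. I would argue both directions. If $\gamma(t)=blue$, then at some earlier iteration a blue-colored vertex $p\in S$ relaxed the arc $(p,t)$ with $d(t)=d(p)+l_{pt}$; by induction $p\in F_y(u,v)$, so there is a shortest $up$-path through $(u,v)$, and since $l_y(p,t)=l_{pt}$ (arcs leaving a blue vertex carry their upper bound under $y$, because $y_{pt}=1$ only when $p\notin F(u,v)$), appending $(p,t)$ gives a shortest $ut$-path through $(u,v)$, i.e. $t\in F_y(u,v)$. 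Conversely, if $t\in F_y(u,v)$, take a shortest $ut$-path $P$ under $y$ containing $(u,v)$; its second-to-last vertex $p$ satisfies $d_y(u,p)+l_y(p,t)=d_y(u,t)$ and $p\in F_y(u,v)$, and $p$ was extracted before $t$ (it is closer, and the tie-break favors blue vertices), so by induction $\gamma(p)=blue$; when $p$ was processed it relaxed $(t)$ down to $d_y(u,t)$ and colored it blue, and no later red relaxation can have lowered it further or recolored it, since that would require a strictly shorter path. This last point — that once a vertex is correctly colored blue at its final distance it is never overwritten — is where the tie-breaking rule ``choose a blue vertex if one exists'' does the work, and I would treat it carefully: among vertices tied at the minimum distance, processing the blue ones first ensures the blue relaxation reaches $t$ before any red vertex at the same distance could (harmlessly) set $\gamma(t)=red$.

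The main obstacle is the coupling between the color of a vertex and the lengths of the arcs leaving it, which makes the ``graph'' on which Dijkstra runs depend on decisions Dijkstra is still making. The key to resolving this is the observation that the relevant scenario is fixed — it is $y$ from \eqref{scenario-definition}, characterized via $F(u,v)$ — and that, by Lemma \ref{F-inter}, $F_y(u,v)=F(u,v)$; so a vertex's color ``blue'' is meant to certify membership in $F(u,v)$, hence $y$-length $l_{pt}$ on its outgoing arcs, and ``red'' certifies the opposite, hence $y$-length $l^-_{pt}$. Once one checks that the algorithm's coloring is consistent with this intended meaning on $S$ at every step — which is exactly the inductive claim — the arc lengths used in the relaxations are precisely the $y$-lengths, and the ordinary Dijkstra correctness argument closes the loop. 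A secondary subtlety to handle is that a vertex may be relaxed multiple times and may switch from red to blue (or conversely) before it is extracted; I would note that only the value at extraction time matters, and that the invariant ``$d(w)$ is the shortest length over $uw$-paths in $G[S\cup\{w\}]$ under $y$, and $\gamma(w)$ records whether that optimum is attained by a blue path'' is maintained by each relaxation, which is the routine monotonicity bookkeeping I would not spell out in full.
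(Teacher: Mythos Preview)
Your proposal is correct and follows essentially the same approach as the paper: induction on the order of extraction, the Dijkstra invariant $d(t)=d_y(u,t)$ at extraction time, and the predecessor argument for both directions (the paper phrases the converse as ``red $\Rightarrow t\notin F_y(u,v)$'' by contradiction, which is the contrapositive of your direct argument). Your write-up is in fact more explicit than the paper's about the role of the tie-breaking rule and the color--length coupling; the paper compresses these into the sentence ``by induction hypothesis, the lengths of arcs having their source in $S$ are set according to $y$'' and a one-line Dijkstra appeal.
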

\begin{proof}
  We proceed by induction on the number of vertices of $S.$ When $S=\{u\},$ the property is verified. Now, suppose the property true before the insertion in $S$ of the vertex $t$ such that $d(t)$ is minimum. By induction hypothesis, the lengths of arcs having their source in $S$ are set according to $y.$ Therefore, following Dijkstra's algorithm analysis, we deduce that $d(t)$ is the length of the shortest path from $u$ to $t$ in the graph $G$ under scenario $y.$ If $t$ has been colored blue then there exists a blue vertex $w\in S$ such that $d(t)=d(w)+l_{wt}.$ By induction hypothesis $w\in F_y(u,v)$ and there exists a shortest $ut$-path under scenario $y$ containing $(u,v),$ i.e. $t\in F_y(u,v).$ Now, suppose that $t$ has been colored in red. By contradiction, assume there exists a shortest $ut$-path under scenario $y$ that contains $(u,v).$ This path cannot contain a vertex outside $S$ except $t$ because otherwise, since arc lengths are non-negative, its length according to $y$ would be greater than $d(t)$ by the choice of $t.$ Therefore, the predecessor $w$ of $t$ in this path belongs to $S.$ Since $w$ belongs to a shortest $ut$-path passing via $(u,v),$ by induction, it was colored blue. But in this case, there exists a blue vertex $w$ such that $d(w)+l(w,t)=d(t),$ and $t$ was colored blue as well, a contradiction.
\end{proof}

We are now ready to state the main result of this section.

\begin{proposition}
  Given a graph $G=(V,A),$ two arc-length functions $l^-$ and $l$ such that $0 \le l^-_a \le l_a$ for every arc $a\in A$, and an arc $(u,v)$, Algorithm \ref{algo-strong} computes in $O(|A| + |V| \log |V|)$ the set of vertex $t$ such that $(u,v)$ is $t$-strong.
\end{proposition}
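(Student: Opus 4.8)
The plan is to split the statement into its correctness part and its running-time part, and to reduce the former to the lemmas already proved.

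For correctness, recall from the discussion preceding Lemma \ref{F-inter} that an arc $(u,v)$ is $t$-strong if and only if $t\in F(u,v)$, and that by Lemma \ref{F-inter} we have $F(u,v)=F_y(u,v)$ for the scenario $y$ defined in \eqref{scenario-definition}. Hence it suffices to show that Algorithm \ref{algo-strong} returns exactly $F_y(u,v)$. The while loop adds one vertex of $V-S$ per iteration and never removes one, so after $|V|-1$ iterations $S=V$ and the loop terminates; the output $\{t\in V:\gamma(t)=blue\}$ therefore equals $F_y(u,v)$ as soon as Lemma \ref{correctness-Algo1} applies to every $t\in V$. The only thing to check is that the hypotheses of Lemma \ref{correctness-Algo1} are maintained throughout, i.e. the invariant stated in the paragraph preceding it: the estimated distance $d(w)$ is the length of a $uw$-path in $G[S\cup\{w\}]$ under scenario $y$, an arc is used at its upper length $l$ exactly when its tail is blue or the arc is $(u,v)$ — which is precisely the rule $y_a=0$ — and a vertex $w\in S$ is blue iff $w\in F_y(u,v)$. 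This is exactly what the proof of Lemma \ref{correctness-Algo1} establishes by induction on $|S|$; in particular the apparent circularity (the scenario $y$ is defined through $F(u,v)$, which the algorithm does not know in advance) is resolved there, because the blue vertices discovered by the algorithm turn out to coincide with $F_y(u,v)=F(u,v)$, so the arc lengths the algorithm actually uses are the scenario-$y$ lengths. Thus no further work is needed for correctness beyond combining Lemmas \ref{F-inter} and \ref{correctness-Algo1} with the characterization $S(t)=\{(u,v):t\in F(u,v)\}$.

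For the running time, observe that Algorithm \ref{algo-strong} is structurally Dijkstra's algorithm run from $u$ on $G$: it keeps a tentative distance $d(w)$ for every vertex, repeatedly extracts a not-yet-settled vertex of minimum $d$-value, and relaxes its outgoing arcs, the only differences being that the length used in a relaxation depends on the color of the tail and that the color of the head is updated alongside its distance. The initialization relaxes the arcs of $\delta^{\text{out}}_u$, each vertex of $V$ is extracted from $V-S$ exactly once, and each arc $(t,w)$ with $t\neq u$ is examined exactly once, when $t$ is extracted; each such examination is $O(1)$ work plus possibly one priority-queue update. Implementing the priority queue as a Fibonacci heap gives $|V|$ extract-min operations in amortized $O(\log|V|)$ each and at most $|A|$ decrease-key operations in amortized $O(1)$ each, for a total of $O(|A|+|V|\log|V|)$.

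The one implementation detail that must be handled so as not to inflate this bound is the tie-breaking rule ``among the vertices of $V-S$ of minimum $d$-value, prefer a blue one''. It can be realised at no asymptotic cost by keying the heap with the pair $(d(w),c(w))$, where $c(w)=0$ if $\gamma(w)=blue$ and $c(w)=1$ otherwise, ordered lexicographically; updating $\gamma(w)$ is then just a decrease-key, already accounted for above. Hence the main — and rather mild — obstacle is purely organisational: checking that this secondary key and the color updates fit into the Dijkstra template without extra cost, and, when the ``Dijkstra analysis'' is invoked inside Lemma \ref{correctness-Algo1}, being careful that zero-length arcs cannot cause a vertex of $F_y(u,v)$ to be settled before the blue path reaching it has been propagated, which is exactly the purpose of the tie-breaking rule. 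Everything else follows directly from the results established earlier in this section.
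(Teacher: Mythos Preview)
Your proof is correct and follows essentially the same approach as the paper: correctness by combining Lemma~\ref{F-inter} with Lemma~\ref{correctness-Algo1}, and the running time by observing that the algorithm is a Dijkstra variant implementable with a Fibonacci heap. Your treatment is in fact more careful than the paper's own proof, which dispatches the running-time claim in a single sentence; the explicit handling of the tie-breaking rule via a lexicographic key and the remark on zero-length arcs are useful additions.
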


\begin{proof}
  The lemma \ref{F-inter} shows that given an arc $(u,v)$ the set of vertices $S$ such that $t\in S$ if and only if $(u,v)$ is $t$-strong is a fiber for a specific scenario. The lemma \ref{correctness-Algo1} shows that the algorithm compute this fiber. Therefore the Algorithm \ref{algo-strong} is correct. Analogously to the Dijkstra's algorithm, Algorithm \ref{algo-strong} can be implemented to run in $O(|A|+|V|\log |V|)$ by using a Fibonacci heap as priority queue.
\end{proof}

\medskip

\subsection{Computing $W(t)$ for all $t$}

An adaptation of Algorithm \ref{algo-strong} can compute, given an arc $(u,v)$ the set of vertices $W$ such that $t\in W$ if and only if $(u,v)$ is $t$-usless. Before describing this adaptation, we explain how the two problems  are related. For that, we first introduced a strengthening of the notion of strongness. We will say that an arc $(u,v)$ is strictly $t$-strong if it belongs to all shortest $ut$-paths for every scenario $x \in \{0,1\}^{\Phi}.$ Recall that $t$-strongness requires only the existence of a shortest $ut$-path passing via $(u,v)$ for every scenario $x \in \{0,1\}^{\Phi}.$ Algorithm \ref{algo-strong} can be easily adapted to compute for every arc $(u,v)$ the set of vertex $t$ such that $(u,v)$ is strictly $t$-strong. It suffices to change the way, the algorithms breaks tie between a red and a blue path and the choice of $t$ in case of tie. Namely, in the first internal loop, the condition for coloring $w$ in blue becomes $d(w)>d(t)+l_{tw}$ while the condition for coloring $w$ in red in the second internal loop becomes $d(w)\ge d(t)+l^-_{tw}.$ Moreover, when we choose $t$ such that $d(t)$ is minimal, we break tie by choosing a red vertex if it exists. Clearly, these small changes exclude the existence of a red path of length $d(w)$ between $u$ and a blue vertex $w$. Therefore, $(u,v)$ belongs to every path of length $d(w)$ and $(u,v)$ is strictly $w$-strong whenever $w$ is blue. We call the resulting algorithm the strict version of Algorithm \ref{algo-strong}. The next step is to extend the notion of strict strongness to a subset of arcs having the same source.  For any vertex $u\in V,$ a subset $\Gamma \subseteq \Gamma_u^+$ of arcs is strictly $t$-strong if, for every scenario $x \in \{0,1\}^\Phi,$ all shortest $ut$-paths intersect $\Gamma.$ By definition, an arc $(u,v)$ is $t$-useless if and only if $\Gamma_u^+\setminus\{(u,v)\}$ is strictly $t$-strong. Indeed, every $ut$-path avoiding $(u,v)$ intersects $\Gamma_u^+\setminus\{(u,v)\}$ and, conversely, $(u,v)$ belongs to every $ut$-path avoiding $\Gamma_u^+\setminus\{(u,v)\}$. Hence, computing the set of vertex $t$ such that $(u,v)$ is $t$-useless amounts to compute the set of vertex $t$ such that
$\Gamma_u^+\setminus\{(u,v)\}$ is strictly $t$-strong. An algorithm that computes this set of vertices can be obtained from the strict version of Algorithm \ref{algo-strong} by modifying only the initialization step: arc $(u,v)$ is colored in red and its length is set to $l^-_{uv}$ while arcs of $\delta^{\text{out}}_u \setminus \{(u,v)\}$ are colored in blue and their lengths are set to their upper bound. A correctness proof very similar to the one of Algorithm \ref{algo-strong} (and that we will not repeat) shows that a vertex is colored blue by Algorithm \ref{algo-useless} if and only if $(u,v)$ is $t$-useless. Since the two algorithms have clearly the same time complexity, we conclude this section with the following result.

\medskip

\begin{algorithm}[t]
  \caption{Computes the set of vertex $t$ such that $(u,v)$ is $t$-useless\label{algo-useless}}
  \SetKwInOut{Input}{Input}
  \SetKwInOut{Output}{Output}
  \Input{$G=(V, A, l, l^-)$, $(u,v) \in A$}
  \Output{$\{t \in V : (u,v) \text{is} t\text{-useless}\}$}
  $d(v) \gets l^-_{uv}$;
  $\gamma(v) \gets red$\\
  \ForEach{$(u,w) \in \delta^{\text{out}}_u \setminus \{(u,v)\}$} {
    $d(w) \gets l_{uw}$\\
    $\gamma(w) \gets blue$
  }
  $S \gets \{u\}$;
  $\gamma(u)\gets blue$\\

  \While{$S \neq V$} {
  Pick $t \in V-S$ with smallest $d(t)$ breaking tie by choosing a vertex $t$ such that $\gamma(t)=red$ if it exists\\
  \If{$\gamma(t)$ is $blue$}{
  \ForEach{$(t,w) \in \delta^{\text{out}}_t$ such that $d(w) > d(t)+l_{tw}$} {
  $d(w) \gets d(t)+l_{tw}$\\
  $\gamma(w) \gets blue$
  }
  }
  \Else{
  \ForEach{$(t,w) \in \delta^{\text{out}}_t$ such that $d(w) \ge d(t)+l^-_{tw}$} {
  $d(w) \gets d(t)+l^-_{tw}$\\
  $\gamma(w) \gets red$
  }
  }
  $S \gets S \cup \{t\}$
  }
  \Return $\{t\in V : \gamma(t)=blue\}$
\end{algorithm}

\begin{proposition}
  Given a graph $G=(V,A),$ two arc-length functions $l^-$ and $l$ such that $0 \le l^-_a \le l_a$ for every arc $a\in A$, and an arc $(u,v)$, Algorithm \ref{algo-useless} computes in $O(|A| + |V| \log |V|)$ the set of vertex $t$ such that $(u,v)$ is $t$-useless.
\end{proposition}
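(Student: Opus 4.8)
The plan is to reduce this to the strict-strongness machinery that has just been set up, and then reuse, almost verbatim, the two lemmas behind Algorithm \ref{algo-strong}. As already explained in the paragraph preceding the statement, $(u,v)$ is $t$-useless if and only if $\delta^{\text{out}}_u\setminus\{(u,v)\}$ is strictly $t$-strong (a simple $ut$-path uses $(u,v)$ — necessarily as its first arc — if and only if it avoids every other arc leaving $u$), and Algorithm \ref{algo-useless} is exactly the strict version of Algorithm \ref{algo-strong} with the initialisation recoloured: $(u,v)$ becomes the unique ``red'' arc out of $u$ with length $l^-_{uv}$, while the arcs of $\delta^{\text{out}}_u\setminus\{(u,v)\}$ become the ``blue'' set with their upper-bound lengths (the internal loops and the tie-break-towards-red are those of the strict version). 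So it suffices to (i) lift Lemma \ref{F-inter} to this strict, set-valued setting, (ii) lift the colouring invariant of Lemma \ref{correctness-Algo1} to Algorithm \ref{algo-useless}, and (iii) check that the running time is unchanged.

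For (i) I would attach to $\Gamma:=\delta^{\text{out}}_u\setminus\{(u,v)\}$ the extremal scenario $y'$ in which the length of $(u,v)$, and of every arc whose tail lies on some shortest $ut'$-path through $(u,v)$ under $y'$, is set to its lower bound, while every other arc is set to its upper bound; this is the scenario in which a path using $(u,v)$ is made as competitive as possible against paths that meet $\Gamma$, and it is the natural mirror of the scenario $y$ in \eqref{scenario-definition}. The claim, proved exactly as Lemma \ref{F-inter}, is that if under $y'$ every shortest $ut$-path avoids $(u,v)$ then the same holds under every scenario $x$: take a counterexample $t$ minimising $d_{y'}(u,t)$, let $P$ be a shortest $ut$-path avoiding $(u,v)$ under $y'$ (so all its arcs sit at their upper bound and $l_{y'}(P)=l(P)$) and let $Q$ be a shortest $ut$-path through $(u,v)$ under $x$; minimality of $t$ forces the relevant arcs of $Q$ into the lower-bound region of $y'$, so $l_{y'}(Q)=l^-(Q)\le l_x(Q)$, and the chain $l_x(P)\le l_y{}'(P)\le l_{y'}(Q)\le l_x(Q)$ shows $P$ is also shortest under $x$, contradicting the choice of $t$. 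For (ii) I would then run the induction of Lemma \ref{correctness-Algo1} with the blue/red roles of Algorithm \ref{algo-useless} (blue vertex $=$ reached by a shortest path under $y'$ that meets $\Gamma$, red vertex $=$ reached by a shortest path under $y'$ using $(u,v)$): when the vertex $t$ of minimum $d(t)$ is extracted, the colours of $S$ already fix the $y'$-length of every arc leaving $S$, so Dijkstra correctness gives $d(t)=d_{y'}(u,t)$, and the strict tie-break-towards-red makes $t$ end up blue precisely when every shortest $ut$-path under $y'$ meets $\Gamma$, i.e. (by (i)) when $\Gamma$ is strictly $t$-strong, i.e. when $(u,v)$ is $t$-useless. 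The boundary case $t=u$ is consistent: the empty path uses neither $(u,v)$ nor $\Gamma$, so $(u,v)$ is vacuously $u$-useless and $u$ is coloured blue at initialisation.

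For (iii), Algorithm \ref{algo-useless} has the same control structure as Dijkstra and as Algorithm \ref{algo-strong}: one priority-queue extraction per vertex and one key update per arc relaxation, the tie-break towards red being a constant-time secondary comparison (or folded into the key). Hence a Fibonacci-heap implementation runs in $O(|A|+|V|\log|V|)$, which gives the stated bound. The main obstacle is entirely in step (i)–(ii): one must be careful that ``avoids $(u,v)$'' and ``meets $\delta^{\text{out}}_u\setminus\{(u,v)\}$'' really coincide on the shortest $ut$-paths one considers (restrict to simple paths, harmless since arc lengths are non-negative), and that the self-referential definition of $y'$ is benign in the same way as that of $y$ — namely that the Dijkstra-style sweep can build $y'$ incrementally while it computes the colouring — and that the strict inequalities and the red-preferring tie-break together enforce the invariant ``blue $\iff$ all shortest $ut$-paths meet $\Gamma$'' rather than merely ``some shortest $ut$-path meets $\Gamma$''. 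Everything else is the verbatim transcription of Lemmas \ref{F-inter} and \ref{correctness-Algo1}, which is why it need not be written out again.
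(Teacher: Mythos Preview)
Your approach is exactly the one the paper intends: the paper gives no formal proof of this proposition, only the remark that ``a correctness proof very similar to the one of Algorithm~\ref{algo-strong} (and that we will not repeat) shows that a vertex is coloured blue by Algorithm~\ref{algo-useless} if and only if $(u,v)$ is $t$-useless,'' together with the observation that the running time is identical. Your reduction to strict strongness of $\Gamma=\delta^{\text{out}}_u\setminus\{(u,v)\}$, the colouring invariant in (ii), and the complexity argument in (iii) are all correct and match what the paper has in mind.

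Your sketch of step (i), however, contains two concrete slips that would need fixing. First, the roles of $P$ and $Q$ with respect to minimality are swapped. If $y'$ is defined (as in the paper's equation~\eqref{scenario-definition}) in terms of the global set $R:=\bigcup_{x}R_x(u,v)$ of vertices admitting, under \emph{some} scenario, a shortest path from $u$ through $(u,v)$, then the vertices of $Q$ lie in $R_x\subseteq R$ \emph{directly}, so the arcs of $Q$ sit at their lower bound under $y'$ with no appeal to minimality. It is the vertices of $P$ that need the minimality argument: one first checks that every vertex $z$ of $P$ lies outside $R_{y'}$ (otherwise concatenating a shortest $uz$-path through $(u,v)$ with $P[z,t]$ would put $t$ in $R_{y'}$), and then minimality of $d_{y'}(u,t)$ upgrades this to $z\notin R$, since a $z\in R\setminus R_{y'}$ on $P$ would be an earlier counterexample. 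Second, your concluding contradiction --- ``$P$ is also shortest under $x$'' --- is not a contradiction at all: the existence of a shortest $ut$-path under $x$ avoiding $(u,v)$ is perfectly compatible with $t\in R_x$. The chain you wrote does yield the right contradiction, but read the other way: from $l_{y'}(Q)=l^-(Q)\le l_x(Q)=d_x(u,t)\le l_x(P)\le l(P)=l_{y'}(P)=d_{y'}(u,t)$ one concludes that $Q$ is a shortest $ut$-path under $y'$ through $(u,v)$, whence $t\in R_{y'}$, contradicting the hypothesis that $t$ is blue under $y'$. With these two corrections your argument goes through verbatim.
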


\subsection{Operations to reduce the size of the graph}

\noindent
\textbf{Removal of an arc from $W(t)$.} Let $(u,v) \in W(t)$, since for all scenarios $x\in \{0,1\}^{\Phi}$, $(u,v)$ does not belong to any shortest path from $u$ to $t,$ its removal does not affect the distance from any vertex to $t$. It is clear, from the definition of ECA, that the removal of $(u,v)$ does not affect the contribution of $t$ to ECA. Let $f^x_t(G):=\sum_{s\in V} w_s w_t \Pi^x_{st}$ be the contribution to ECA of all the pairs having sink $t$ in the graph $G$ when the probability of connection $\Pi^x_{st}$ is computed under the scenario $x.$

\begin{lemma}\label{remove}
  Let $(u,v)\in W(t)$ and let $G'$ be a graph obtained from $G$ by removing arc $(u,v).$ Then, for all scenario $x\in \{0,1\}^\Phi,$ it holds that $f^x_t(G)=f^x_t(G').$
\end{lemma}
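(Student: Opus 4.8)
The plan is to reduce the statement to the invariance of shortest-path distances to $t$ under deletion of $(u,v)$. Recall that under the length function $l_a=-\log\pi_a$ one has $\Pi^x_{st}=e^{-d_x(s,t)}$ for every pair $s,t\in V$ and every scenario $x$; consequently $f^x_t(G)=w_t\sum_{s\in V}w_s\,e^{-d_x(s,t)}$ depends on the graph only through the distances $d_x(s,t)$ for $s\in V$. Hence it suffices to prove that, for every $s\in V$ and every scenario $x$, the distance from $s$ to $t$ is the same in $G$ and in $G'$.

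Since $G'$ is obtained from $G$ by deleting the arc $(u,v)$, every $st$-path of $G'$ is an $st$-path of $G$, so distances cannot decrease: $d^{G'}_x(s,t)\ge d^{G}_x(s,t)$. For the converse inequality, fix $s$ and $x$ and let $P$ be a shortest $st$-path in $G$ under scenario $x$. I claim $P$ does not use the arc $(u,v)$. Suppose it did; then the sub-path $P'$ of $P$ running from $u$ to $t$ would satisfy $l_x(P')=d_x(u,t)$, because a sub-path of a shortest path is again a shortest path when arc lengths are non-negative. Thus $P'$ would be a shortest $ut$-path under $x$ that traverses $(u,v)$, contradicting the assumption $(u,v)\in W(t)$. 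Therefore $P$ avoids $(u,v)$ and is a path of $G'$, so $d^{G'}_x(s,t)\le l_x(P)=d^{G}_x(s,t)$.

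Combining the two inequalities gives $d^{G'}_x(s,t)=d^{G}_x(s,t)$ for all $s\in V$ and all scenarios $x$; hence $\Pi^x_{st}$ is unchanged by the deletion, and summing $w_s w_t \Pi^x_{st}$ over $s\in V$ yields $f^x_t(G)=f^x_t(G')$. The only delicate point is the sub-path observation: $t$-uselessness of $(u,v)$ is phrased only in terms of shortest $ut$-paths, and one has to notice that it actually precludes $(u,v)$ from lying on \emph{any} shortest path ending at $t$; everything else is immediate from the monotonicity of distances under arc deletion and from the exponential relation between distances and probabilities of connection.
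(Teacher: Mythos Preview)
Your proof is correct and follows the same line as the paper's: removing $(u,v)$ leaves every $\Pi^x_{st}$ (equivalently, every distance $d_x(s,t)$) unchanged, so $f^x_t$ is preserved. Your sub-path argument---that a shortest $st$-path through $(u,v)$ would yield a shortest $ut$-path through $(u,v)$, contradicting $t$-uselessness---is exactly the point the paper's terse proof leaves implicit.
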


\begin{proof}
  For every scenario $x\in \{0,1\}^{\Phi}$, $(u,v)$ does not belong to any shortest path from $u$ to $t.$ Therefore, the removal of $(u,v)$ cannot affect the probability of connection $\Pi_{ut}$ for any vertex $t$, Thus $f^x_t(G)=f^x_t(G').$
\end{proof}

\noindent
\textbf{Contraction of an arc in $S(t)$.}
Now, assume that $(u,v) \notin \Phi, (u,v) \in S(t)$. The contraction of $(u,v)$ consists in replacing every arc $(w,u) \in \delta^{\text{in}}_u$ by an arc $(w,v)$ of length $l'_{wv}=l_{wu}+l_{uv}$ and by removing the vertex $u$ and all its outgoing arcs. The weight of $u$ in $G$ is moved to the weight of $v$ in $G'.$ Namely, the weight of $v$ in the new graph is $w'_v = w_v+w_u \exp(-l_{uv}).$  Let $G'$ be the graph obtained from $G$ by contracting  $(u,v).$ The next lemma establishes that the contribution of $t$ to ECA in $G$ is equal to its contribution in $G'.$

\begin{figure}[!ht]
  \centering
  \subfloat[]{
    \includegraphics[width=0.45\linewidth]{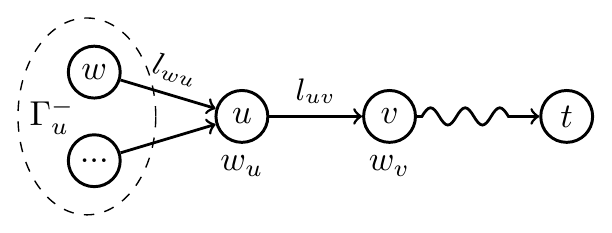}
  }
  \hfill
  \subfloat[]{
    \includegraphics[width=0.45\linewidth]{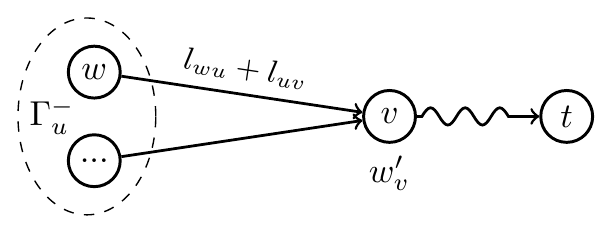}
  }
  \caption{(a) A graph $G$ before contraction of an arc $(u,v).$ (b) A graph $G'$ obtained from $G$ by contracting arc $(u,v)$. The weight $w'_v$ of $v$ in $G'$ is equal to $w_v+w_u \exp(- l_{uv}).$}
\end{figure}

\begin{lemma}\label{contract}
  Let $(u,v)\in S(t)$ and let  $G'$ be the graph obtained from $G$ by contracting arc $(u,v)$ and modifying accordingly the weight of $w_v.$ For every scenario $x\in \{0,1\}^\Phi,$  $f^x_t(G)=f^x_t(G').$
\end{lemma}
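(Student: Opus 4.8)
The plan is to compare, scenario by scenario, the sums defining $f^x_t(G)$ and $f^x_t(G')$ and to check that deleting the vertex $u$ changes nothing except that the term $w_u\,\Pi^x_{ut}$ gets reabsorbed into the new weight $w'_v=w_v+w_u\exp(-l_{uv})$ of $v$. So fix a scenario $x$ and let $\ell$ denote the corresponding length function on $G$; since $(u,v)\notin\Phi$ we have $\ell(u,v)=l_{uv}$, and since $(u,v)\in S(t)$ we have $d_x(u,t)=l_{uv}+d_x(v,t)$ (here $t\neq u$, which is the only relevant case since $u$ disappears). Under scenario $x$, the graph $G'$ is obtained from $G$ by deleting $u$ together with its out-arcs and replacing each in-arc $(w,u)$ by a shortcut arc $(w,v)$ of length $\ell(w,u)+\ell(u,v)$.

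The core step is the claim that $d^{G'}_x(s,t)=d^{G}_x(s,t)$ for every $s\in V\setminus\{u\}$. For $d^{G'}_x(s,t)\le d^{G}_x(s,t)$, I would take a simple shortest $st$-path $P$ in $G$: if $P$ avoids $u$, it is already an $st$-path of the same length in $G'$; otherwise its suffix from $u$ to $t$ is a shortest $ut$-path, hence by $t$-strongness it can be replaced by the arc $(u,v)$ followed by a shortest $vt$-path avoiding $u$, and the two consecutive arcs $(w,u),(u,v)$ then collapse into the shortcut $(w,v)$ of $G'$, producing an $st$-walk of $G'$ of the same length as $P$. For the reverse inequality, I would expand every shortcut $(w,v)$ occurring on an $st$-path of $G'$ back into the two arcs $(w,u),(u,v)$ of $G$, turning it into an $st$-walk of $G$ of the same length. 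This yields $\Pi^x_{st}(G')=\Pi^x_{st}(G)$ for all $s\neq u$, in particular $\Pi^x_{vt}(G')=\Pi^x_{vt}(G)$; combined with $t$-strongness it gives $\Pi^x_{ut}(G)=\exp(-d_x(u,t))=\exp(-l_{uv})\,\Pi^x_{vt}(G)=\exp(-l_{uv})\,\Pi^x_{vt}(G')$.

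It then remains to do the bookkeeping: splitting off the source $s=u$ and using the two identities above,
\[
f^x_t(G)=w_t\!\!\sum_{s\in V\setminus\{u\}}\!\! w_s\,\Pi^x_{st}(G)\;+\;w_t\,w_u\,\Pi^x_{ut}(G)
=w_t\Bigl(\bigl(w_v+w_u\exp(-l_{uv})\bigr)\Pi^x_{vt}(G')+\!\!\!\!\sum_{s\in V\setminus\{u,v\}}\!\!\!\! w_s\,\Pi^x_{st}(G')\Bigr),
\]
and since $w'_v=w_v+w_u\exp(-l_{uv})$ while $w'_s=w_s$ for every other surviving vertex, the right-hand side equals $w_t\sum_{s\in V(G')}w'_s\,\Pi^x_{st}(G')=f^x_t(G')$. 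As $x$ was arbitrary, this proves the lemma.

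I expect the only delicate point to be the inequality $d^{G'}_x(s,t)\le d^{G}_x(s,t)$, and more precisely the possibility that every shortest $vt$-path in $G$ passes through $u$: in that case $d_x(v,t)=d_x(v,u)+d_x(u,t)=d_x(v,u)+l_{uv}+d_x(v,t)$ forces a zero-length cycle through $u$ and $v$, so one must argue that a shortest $vt$-path avoiding $u$ can nevertheless be chosen (e.g.\ by deleting such a zero cycle), or handle this degenerate case directly. Apart from that, everything reduces to elementary path surgery and the arithmetic identity defining $w'_v$.
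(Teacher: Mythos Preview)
Your proof is correct and follows essentially the same route as the paper's: show that distances from every $s\neq u$ to $t$ are unchanged by the contraction, use $t$-strongness to rewrite $\Pi^x_{ut}$ as $\exp(-l_{uv})\,\Pi^x_{vt}$, and then regroup terms so that the $u$-contribution is absorbed into the new weight $w'_v$. The only difference is that you spell out the path-surgery argument for $d^{G'}_x(s,t)=d^{G}_x(s,t)$, whereas the paper simply declares this ``easy to check''; your added detail is welcome. Your residual worry about a shortest $vt$-path forced through $u$ is not a real obstacle: since $(u,v)\in S(t)$, there is a \emph{simple} shortest $ut$-path starting with the arc $(u,v)$, and its suffix after $v$ is a shortest $vt$-path that avoids $u$ by simplicity.
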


\begin{proof}
  Let $s$ be a vertex of $G$ and $x$ be a scenario in $\{0,1\}^\Phi$. We denote $f^x_{st}(G)$ the contribution to ECA of the pair $st$ with scenario $x\in \{0,1\}^\Phi$ in $G.$ If $s\notin \{u,v\},$ it is easy to check that, for any $x\in \{0,1\}^\Phi$ the length of the shortest path from $s$ to $t$ in $G$ is equal to the length of the shortest path from $s$ to $t$ in $G'.$
  Moreover, the weight of $s$ is the same in $G$ and in $G'.$ Therefore, by the definition of ECA, $f^x_{st}(G)=f^x_{st}(G'),$ for any $x\in\{0,1\}^{\Phi}.$  On the other hand, the contributions of the pairs $ut$ and $vt$ in $G$ sum to  the contribution of $v$ in $G'.$ Indeed,
  \begin{eqnarray*}
    f^x_{ut}(G)+f^x_{vt}(G) & = & w_u \exp(- d_x(u,t)) + w_v \exp(- d_x(v,t))\\
    & = & w_u \exp(- (l_{uv} + d_x(v,t))) + w_v \exp(- d_x(v,t))\\
    & = & (w_v+w_u \exp(- l_{uv})) \exp(- d_x(v,t))\\
    & = & w'(v) \exp(- d_x(v,t))\\
    & = & f^x_{vt}(G').
  \end{eqnarray*}
  We conclude that, for any $x\in \{0,1\}^\Phi,$  $f^x_t(G) = \sum_{s\in V} f^x_{st}(G) = f^x_{ut}(G) + f^x_{vt}(G) + \sum_{s\in V-\{u,v\}} f^x_{st}(G) = f^x_{vt}(G') + \sum_{s\in V-\{u,v\}} f^x_{st}(G') = f^x_t(G').$
\end{proof}

\noindent
\textbf{Graph reduction.}
Let $G_t$ be the graph obtained from $G$ by deleting every arc of $W(t)$ and contracting every arc of $S(t)$. By induction the contribution of $t$ is preserved. The preprocessing step consists in computing the graph $G_t$ for every vertex $t.$ For that, we compute $F(u,v)$ for every arc $(u,v)$ in $O(|A|\cdot(|A|+|V|\log |V|)).$ Then we transpose the representation to obtain $S(t)$ for each vertex $t.$ Similarly, we compute $W(t)$ for each vertex $t$ within the same complexity. The experiments of Section \ref{experiments} show that, when the number of arcs that can be protected is much smaller than the total number of arcs, replacing $G$ by $G_t$ in the construction of the mixed integer program reduces significantly the size of the MIP formulation and the running times.

  {
    \section{Greedy algorithms and their limits}\label{section_greedy_bad_cases}
  }

In the lack of an efficient method to compute an optimal solution of BC-ECA-Opt, ecologists often use greedy algorithms to compute sub-optimal solutions {\cite{OPTIMALITY-SCP_HANSON_2019}.
    In this section, we present four commonly used \textit{greedy algorithms} and highlight their pathological cases. In section \ref{optimality}, we compare the quality of the solutions obtained with these greedy algorithms with the optimal solution on four case studies.}

The \textit{Incremental Greedy} (IG) algorithm starts from the graph with no improved element. At each step $i$, the algorithm selects the element $e$ with the greatest ratio $\delta_e^i/c_e$ until no more element fits in the budget. Here, $\delta_e^i$ denotes the difference between the value of ECA with and without the improvement of the element $e$ at the step $i.$ As usual, $c_e$ is the cost of improving the element $e$. The element $e$ can be either an arc or a vertex.

The \textit{Decremental Greedy} (DG) {algorithm, similar to the Zonation algorithm \cite{ZONATION-ALGO_MOILANEN_2005},} algorithm starts from the graph with all improvements performed and iteratively removes the improvement of the element $e$ with the smallest ratio $\delta_e^i/c_e$. DG finishes with incremental steps to ensure there is no free budget left. These algorithms perform at most $|\Phi|$ steps and at each step $i$ need to compute $\delta_e^i$ for each element $e$. It is easy to implement IG in $O(|V|^3 + |\Phi|^2 \cdot |V|^2)$ by using an all pair shortest path algorithm to compute in $O(|V|^3)$ the initial distance matrix and then by performing $|\Phi|$ steps in which the computation of $\delta^i_e$ for each arc $e\in \Phi$ takes $O(|V|^2).$ Indeed, when we decrease the length of an edge we can update the distance matrix in $O(|V|^2)$.
For the implementation of DG, when we increase the length of an edge we cannot update the distance matrix as easily as for IG. We can recompute in $|V|^3$ the whole distance matrix for computing each $\delta^i_e$ and the algorithm runs in $O(|\Phi|^2 \cdot |V|^3)${. Dynamically updating shortest path lengths would improve computational complexity \cite{DYNAMIC-SHORTEST-PATHS_DEMETRESCU_2004}.}  These complexities are already too large for the practical instances handled by ecologists which can have few thousands of patches. Most of the studies using the PC or ECA indicators use simpler algorithms that we call \textit{Static Increasing} (SI) and \textit{Static Decreasing} (SD). These algorithms are variants of the greedy algorithms which do not recompute the ratio $\delta_e/c_e$ of each element $e$ at each step and thus are faster but do not account for cumulative effects nor redundancies.

  {Below, we provide} instances on which IG and DG performs poorly compared to an optimal solution. On these instances, it is easy to check that the solutions returned by SI and SD are not better than the solutions returned by IG and DG.
In the {following instances}, all arcs have a probability $1$ if improved and $0$ otherwise and have unitary costs. Recall that a spider is a tree
consisting of several paths glued together on a central vertex {(Fig. \ref{bad_case_inc_graph}, \ref{bad_case_dec_graph} and \ref{bad_case_both_graph})}.

\begin{caseof}
  \case{Bad case for IG}{bad_case_inc}{The graph is a spider with $2k$ branches: $k$ long branches with two edges, an intermediate node of weight $0$ and a leaf node of weight $1$, and $k$ short branches consisting of a single edge with a leaf of very small weight $\epsilon > 0$, see Fig.2 (a). All branches are connected to a central node of weight $1$. IG performs poorly on this instance. Indeed, IG is tricked into purchasing short branches with very small ECA improvement because purchasing an edge of a long branch alone does not increase ECA at all. An optimal solution results in larger value of ECA by improving pairs of arcs of long branches.

    In this case, IG does not perform well while DG finds an optimal solution computed by the MIP solver except when the budget is $1.$ In this case, the reverse occurs: DG performs badly while IG is optimal.  Indeed, DG realizes that the budget is not sufficient to improve two arcs of a long branch only after removing the improvements of all short branches.

    \vspace{-0.5cm}
    \begin{figure}[H]
      \centering
      \subfloat[]{
        \raisebox{-0.525\height}{\includegraphics[width=0.3\linewidth]{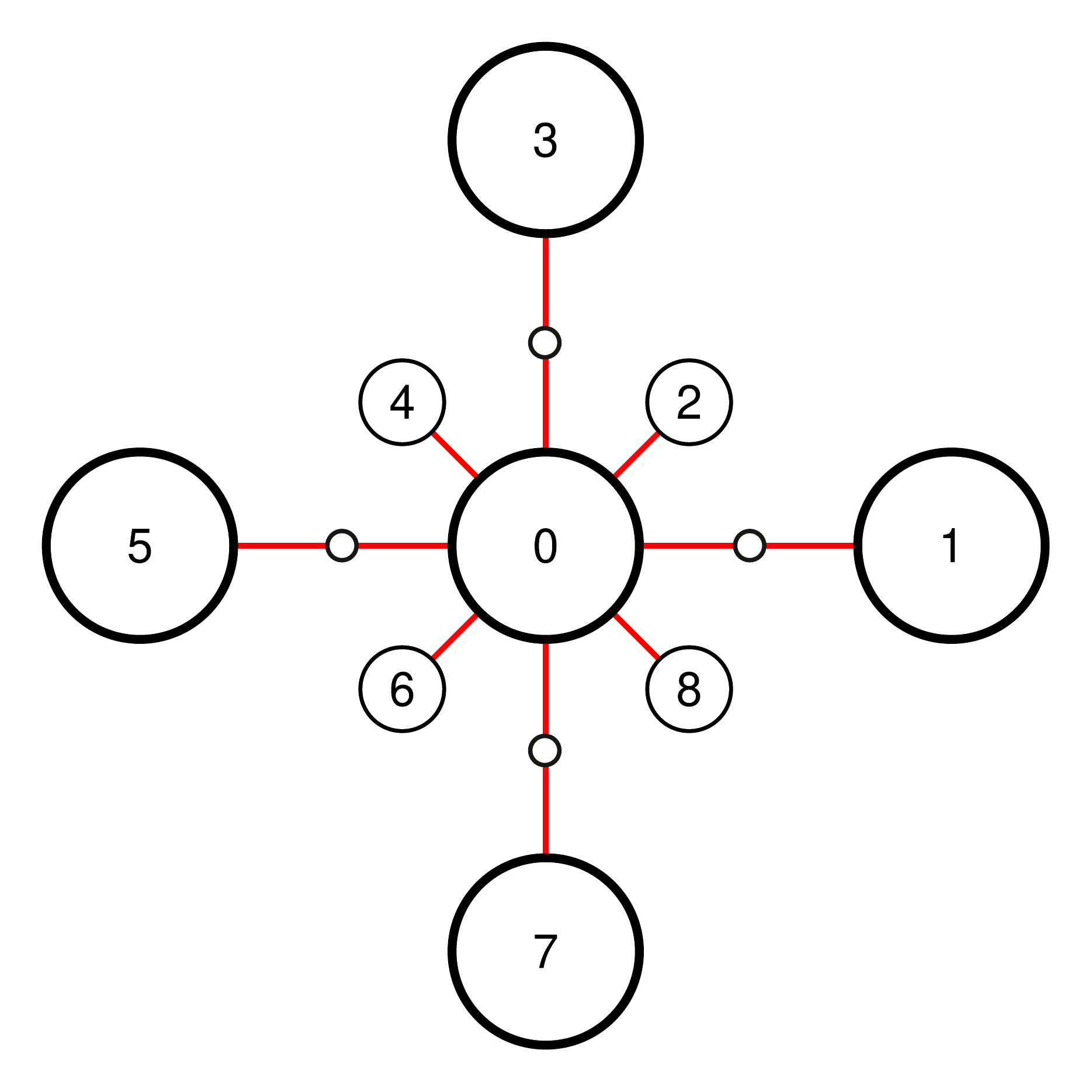}\label{bad_case_inc_graph}}}
      \hfill
      \subfloat[]{
        \raisebox{-0.5\height}{\includegraphics[width=0.675\linewidth]{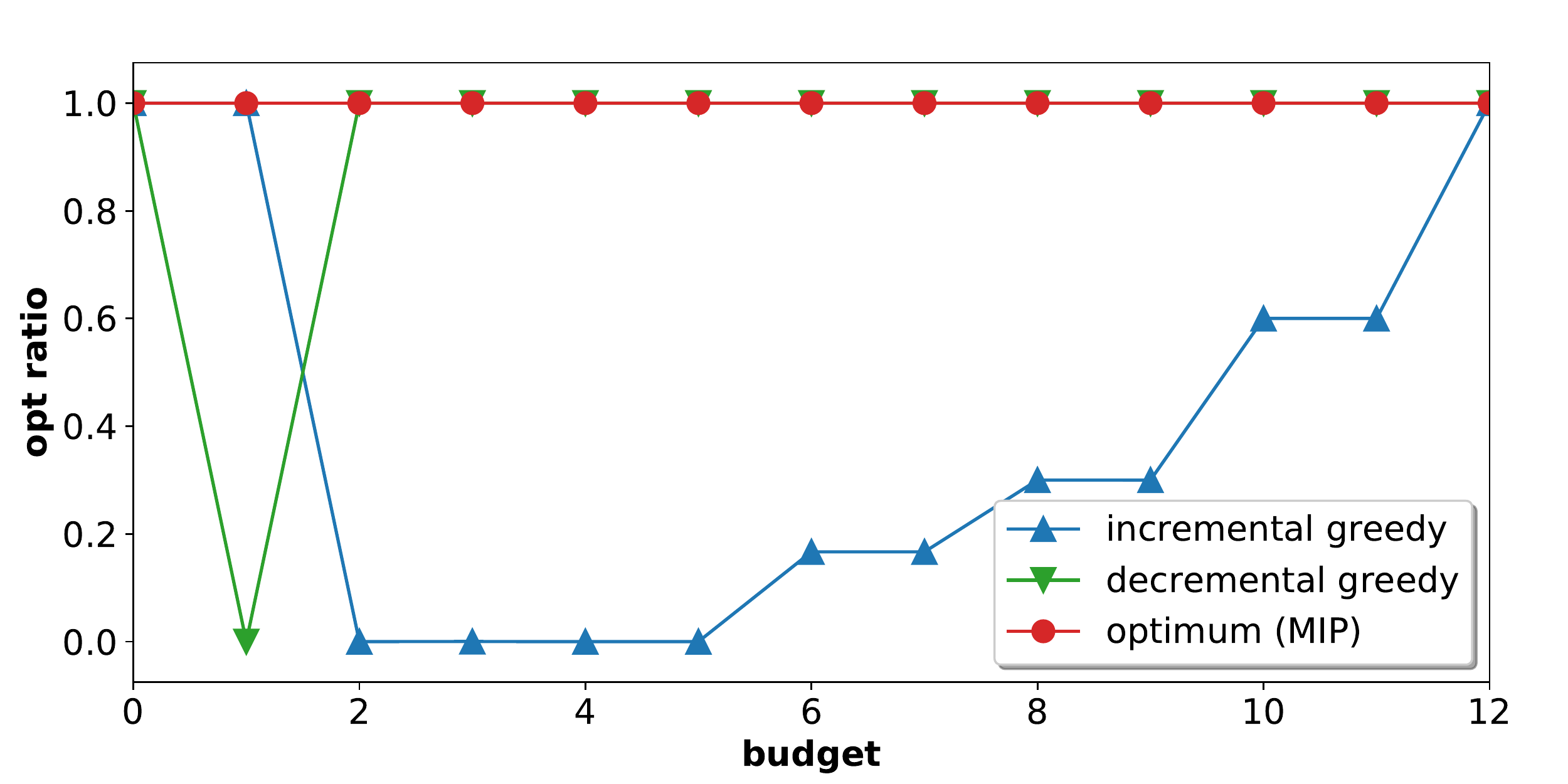}\label{bad_case_inc_plot}}}
      \caption{An instance on which Incremental Greedy fails. (a) the graph of the \hyperref[bad_case_inc]{{IG bad case}} with $k=4$, (b) ratio of the increase in ECA between the solutions returned by IG and DG and an optimal solution for several budgets.
      } \label{bad_case_inc_fig}
    \end{figure}
  }
  \case{Bad case for DG}{bad_case_dec}{The graph is obtained from a star with $k+1$ branches by replacing one branch by a path of  length $k.$ The central node and all leaves except the leaf of the path have weight $1.$ The leaf of the path has weight $1 + \epsilon$.  The internal nodes of the path have weight $0$. DG performs poorly on this instance because it removes one by one the branches of the star for which $\delta_e/c_e=1$ before removing an edge of the path for which $\delta_e/c_e=1+\epsilon.$ When the budget is at least $2,$ an optimal solution removes all the edges of the path before removing an edge of another branch.

    \vspace{-0.25cm}
    \begin{figure}[H]
      \centering
      \subfloat[]{
        \raisebox{-0.525\height}{\includegraphics[width=0.3\linewidth]{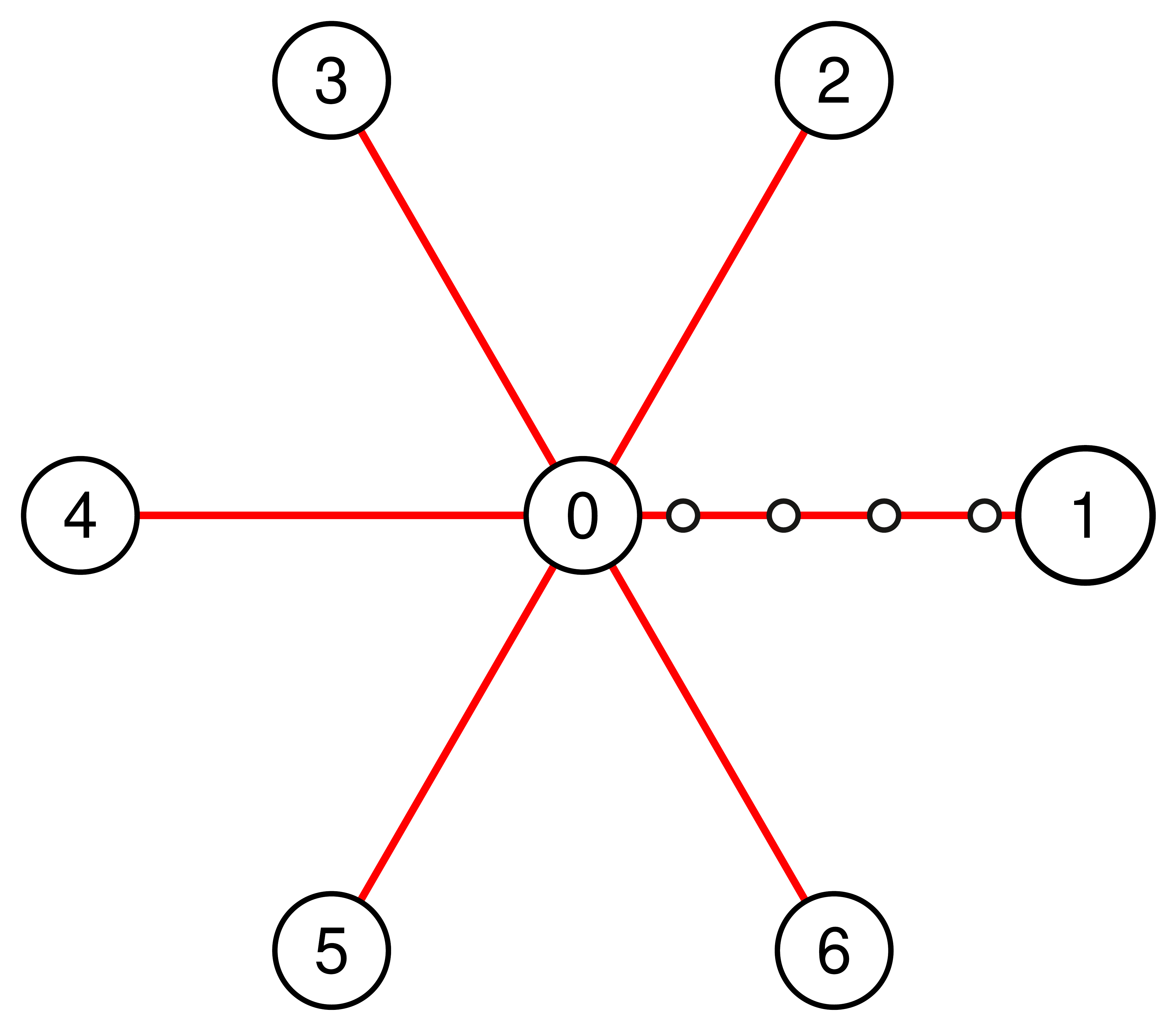}\label{bad_case_dec_graph}}}
      \hfill
      \subfloat[]{
        \raisebox{-0.5\height}{\includegraphics[width=0.675\linewidth]{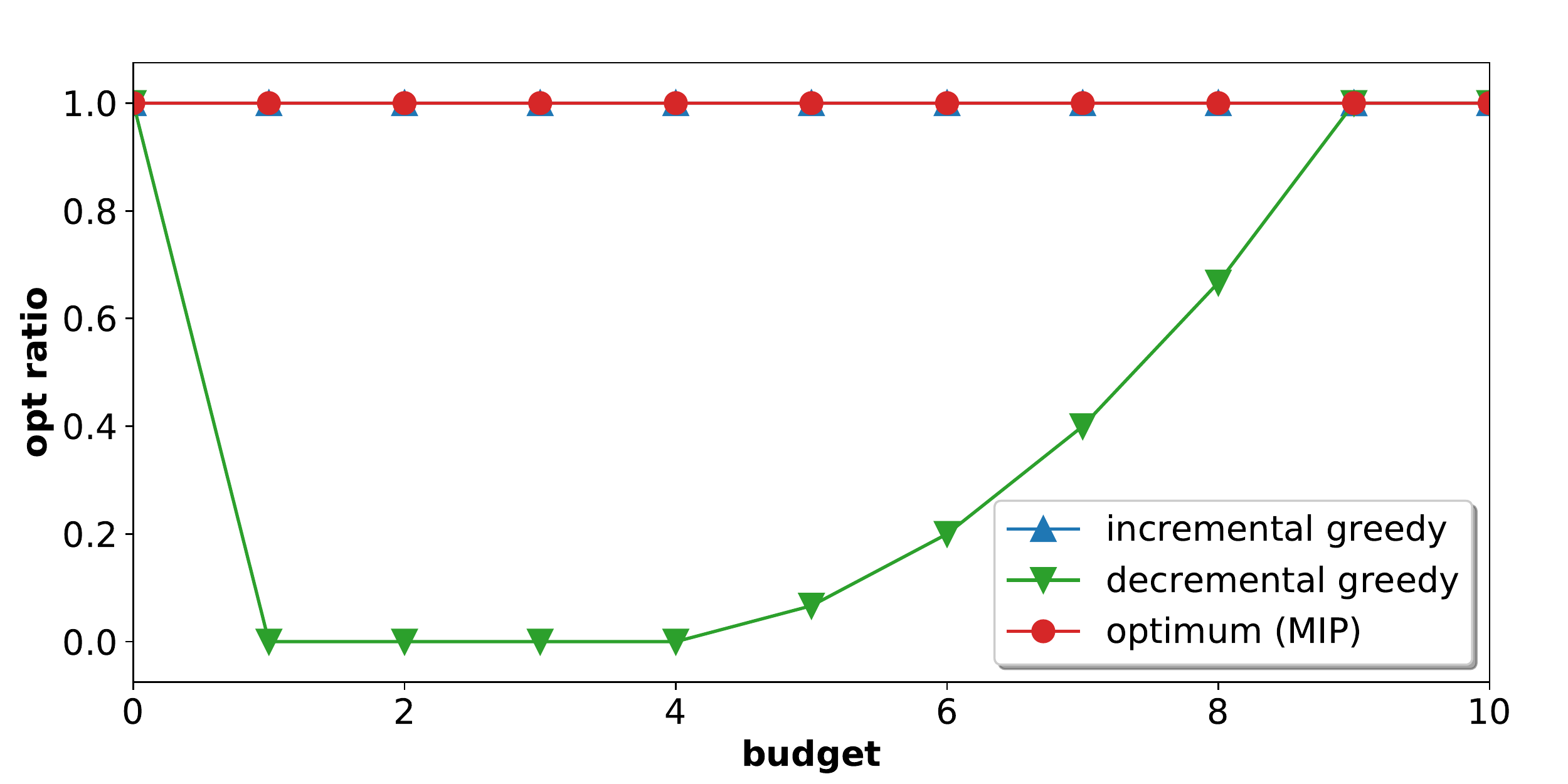}\label{bad_case_dec_plot}}}
      \caption{An instance on which Decremental Greedy fails. (a) the graph of the \hyperref[bad_case_dec]{{DG bad case}} with $k=5$, (b) ratio of the increase in ECA between the solutions returned by IG and DG and an optimal solution for several budgets.
      }\label{bad_case_dec_fig}
    \end{figure}
  }
  \case{Bad case for IG and DG}{bad_case_both}{The graph is a spider with $k+1$ branches. All branches except one are paths of length 2 with an internal node of weight $0$ and a leaf of weight $1.$ The last branch is a path of length $2k$ with internal nodes of weight $\epsilon > 0$ and a leaf of weight $1 + \epsilon$. All branches intersect in a central node of weight $1$. In this case, both Incremental and Decremental Greedy fail. On one hand, IG selects the edges of the path of length $2k$ one by one and does not realize that by taking two edges of a short branch it could improve much more ECA. On the other hand, DG removes first the edges of the short branch because the weight of  leaf of a long branche is $1+\epsilon$ while the weight of the leaf of a short branch is $1.$ Hence, DG and IG return the same law quality solution.

    \vspace{-0.25cm}
    \begin{figure}[H]
      \centering
      \subfloat[]{
        \includegraphics[width=0.8\linewidth]{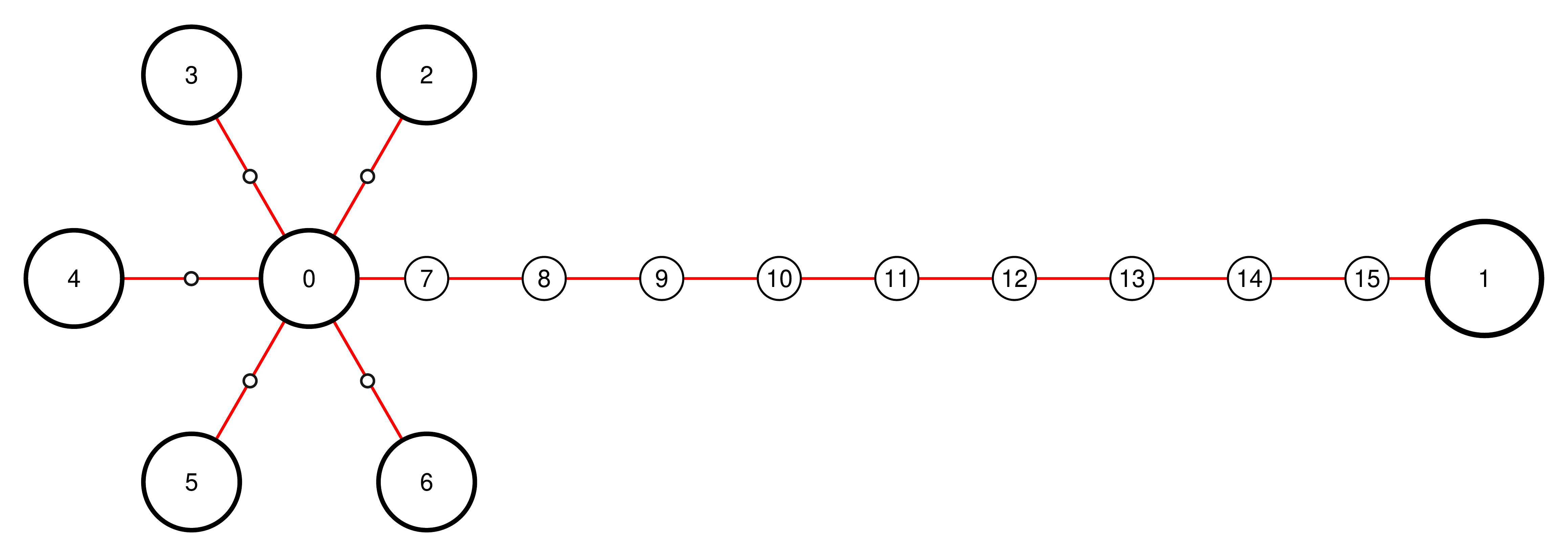}\label{bad_case_both_graph}}
      \hfill
      \subfloat[]{
        \includegraphics[width=0.9\linewidth]{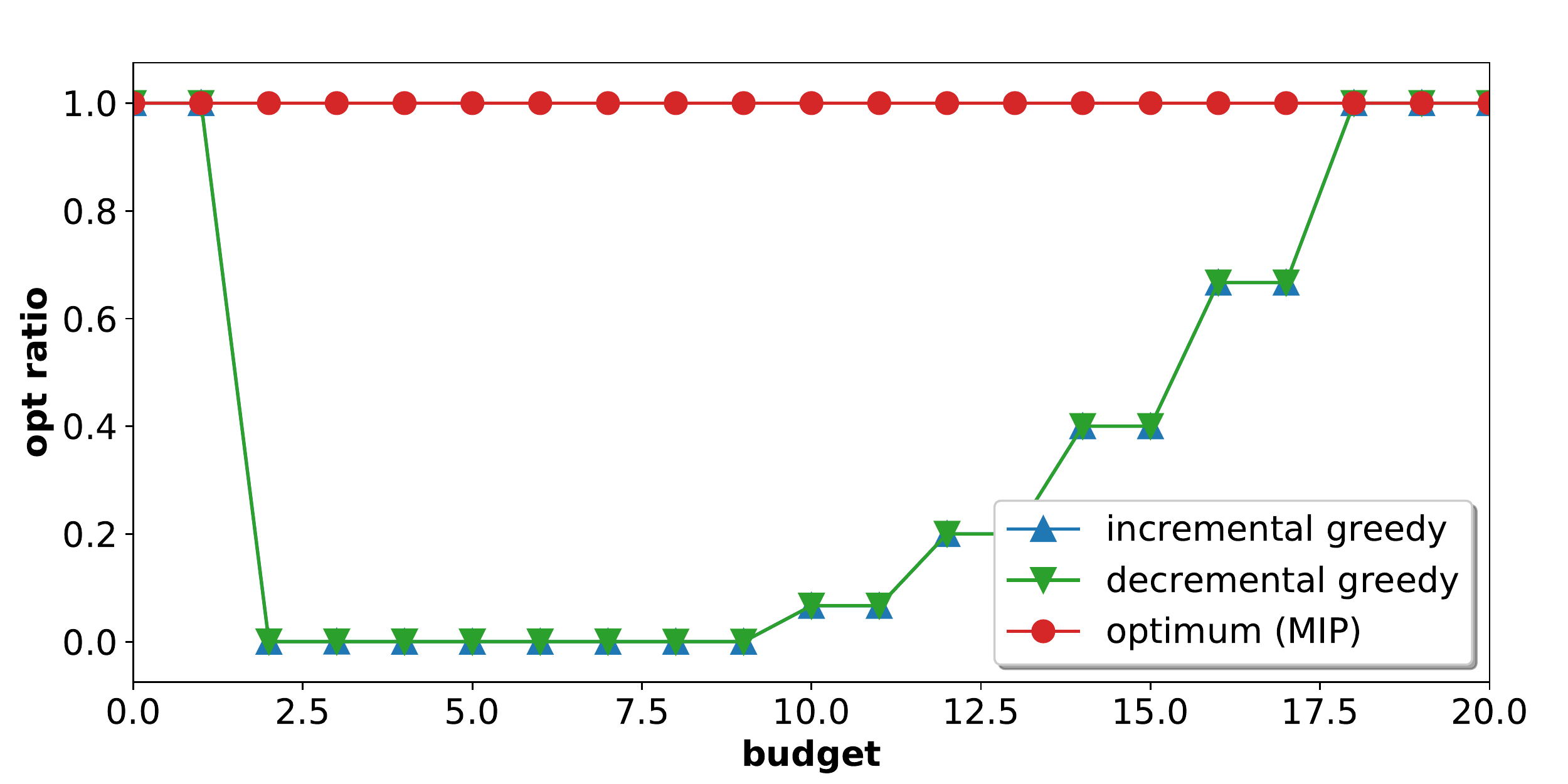}\label{bad_case_both_plot}}
      \caption{An instance on which both Incremental and Decremental Greedy algorithms fail. (a) the graph of the \hyperref[bad_case_both]{IG and DG bad case} with $k=5$, (b) ratio of the increase in ECA between the solutions returned by IG and DG and an optimal solution for several budgets.
      }\label{bad_case_both_fig}
    \end{figure}
  }
\end{caseof}

The case of Figure \ref{bad_case_both_fig} illustrates the fact that IG and DG do not provide any constant approximation guarantee (even for trees), i.e. for any constant $0<\alpha<1$ there exists an instance of BC-ECA-Opt such that $ALG< \alpha OPT$ where ALG is the value of ECA for the best solution among those returned by IG and DG and $OPT$ is the value of ECA for an optimal solution.

\section{Numerical experiments}\label{experiments}

In this section, we report on our computational experiments in order to demonstrate the added benefit of our MIP formulation and preprocessing step. {We performed the numerical experiments} on a desktop computer equipped with an Intel(R) Core(TM) i7-8700k 4.8 gigahertz and 32 gigabytes of memory and running Manjaro Linux release 21.2.4. {We implemented our model as well as the preprocessing and greedy algorithms in C++17 using Gurobi Optimizer \cite{GUROBI} version 9.1.1 with default settings for solving MIP formulations, the graph library LEMON \cite{LEMON_GRAPH_2011} version 1.3.1 for managing graph algorithms, and the library TBB \cite{TBB_INTEL_2008} version 2020.3 for multithreading the preprocessing and greedy algorithms. Code is available at \url{https://gitlab.lis-lab.fr/francois.hamonic/landscape_opt_networks_submission}}.


\subsection{Instances} \label{instances}

{Below, we briefly describe the case studies on which we conduct experiments.

  \textbf{Case study 1\label{river_case}} consists in identifying among a set of 15 dams present on the Aude river (France) those that need to be equipped with fish passes in order to restore the river connectivity for trouts \parencite{AUDE_SAINT-PE_2019}. The graph is a tree of $45$ vertices with $88$ arcs of which $30$ represent damns and can be improved by increasing their probability from $0$ to $0.8$.

  \textbf{Case study 2\label{quebec_case}} consists in identifying the remnant forest patches that need to be preserved from deforestation in the Montreal neighborhood (Canada) to guaranty habitat connectivity for the wood frog \parencite{FRAMEWORK_ALBERT_2017}. The graph is a planar graph of $59
    8$ vertices and $989$ arcs whose $260$ vertices can be improved by increasing their quality and $80$ arcs can be improved by increasing their probability from $0$ to $1$.

  \textbf{Case study 3\label{aix_case}} consists in identifying street sections in which planting trees can improve the connectivity of the urban canopy for the European red squirrel in the city of Aix-en-Provence. The graph is a triangular grid of $6186$ vertices and $27818$ arcs where $47$ street sections, each with an average of $90$ arcs, can be improved by increasing the probability of each arc $a$ from $\pi_a$ to $\pi_a^{1/6}$.

  \textbf{Case study 4\label{marseille_case}} consists in identifying wastelands that need to be preserved from artificialization to maintain connectivity among urban parks and the surrounding natural massifs in the city of Marseille for songbirds (e.g. Eurasian blackcap). The baseline graph is a near complete graph of $297$ nodes and $25024$ arcs, of which $100$ represent wastelands and can be improved by increasing their probability from $0$ to $1$.

}

\subsection{Scalability and benefits of the preprocessing} \label{scalability}

{In this section we address the scalability of our approach and the added benefits of our preprocessing step. For this purpose we execute our method on about one hundred instances obtained from the four case studies by varying the budget.}

\begin{figure}[H]
  \centering
  \includegraphics[width=1\linewidth]{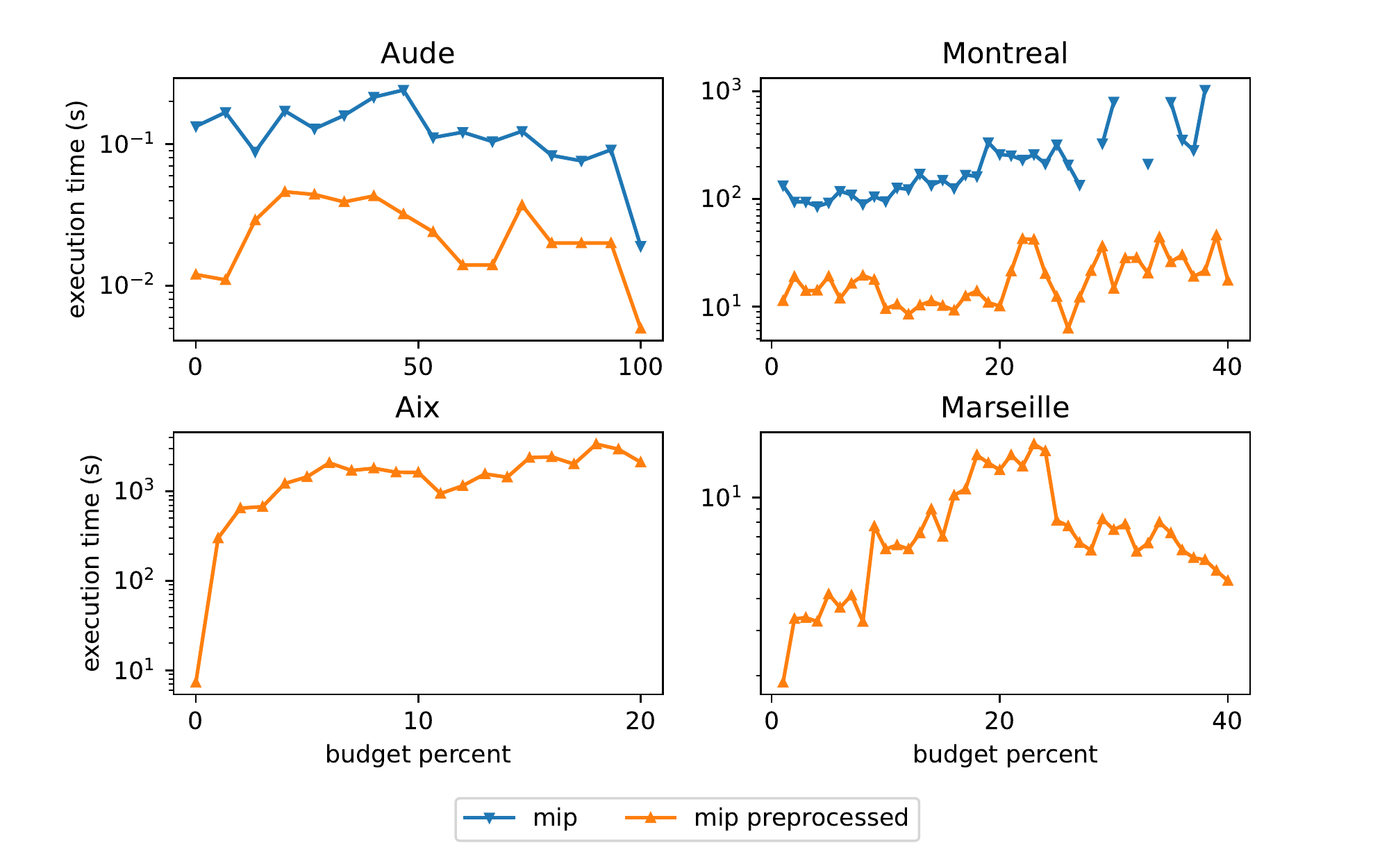}
  \vspace*{-0.6cm}
  \caption{Execution times on the four case studies as a function of the budget (missing points correspond to instances that do not finish within 10 hours)}
  \label{execution_times}
\end{figure}

{For the Aude and Montreal cases, the preprocessing reduces the resolution time by about 10 times (Fig. \ref{execution_times}). Without preprocessing, the Aix and Marseille instances are not solved by the optimizer within 10 hours, whereas with preprocessing they become solvable in about half an hour and half a minute respectively. In most unfinished instances, the optimizer reaches the optimal solution but is not able to complete the exploration of the search space in the allotted time.}

\begin{table*}[!ht]
  \centering
  \begin{tabular}{crrrrrrr}
    \multirow{2}{*}{case} & \multicolumn{3}{c}{MIP} & \multicolumn{4}{c}{preprocessed MIP}\tabularnewline
    \cmidrule(lr){2-4}\cmidrule(lr){5-8}
                          & \#var                   & \#const                                             & time   & \#var  & \#const & p. time & time \tabularnewline
    \hline
    Aude                  & 4061                    & 2551                                                & 120 ms   & 1069   & 1055    & 3 ms     & 20 ms\tabularnewline
    Montreal              & 830530                  & 262445                                              & 4 mins & 318848 & 167153  & 0.26 s    & 19 s\tabularnewline
    Aix                   & 1748708                 & 624841                                              & --     & 555124 & 295010  & $3$ s   & 1600 s\tabularnewline
    Marseille             & 4949825                 & 78410                                               & --     & 41676  & 22465   & $0.9$ s    & 7 s\tabularnewline
  \end{tabular}
  \caption{Comparison of the MIP and the preprocessed MIP according to the number of variables (\#var), the number of constraints (\#const), the preprocessing time (p. time) and the average {computation time} (time).}
  \label{pl_sizes_table}
\end{table*}

The preprocessing represents a small portion of the total {computation time} for all case studies (Table. \ref{pl_sizes_table}). The number of variables of the model is reduced by about $75\%$ in the Aude case, $60\%$ in the Montreal case, $70\%$ in the Aix case and $99\%$ in the Marseille case. This last number is explained by the fact that the Marseille graph is near complete and a large proportion of its arcs are $t$-useless for some vertex $t$. For constraints, the reduction is $60\%$ for the Aude case, $33\%$ for the Montreal one, $53\%$ for the Aix case and about $70\%$ for the Marseille case.

\begin{table*}[ht]
  \centering
  \begin{tabular}{crrrrrrr}
    \multirow{2}{*}{\#wasteland} & \multicolumn{3}{c}{MIP} & \multicolumn{3}{c}{preprocessed MIP} & \multicolumn{1}{c}{DG} \tabularnewline

    \cmidrule(lr){2-4}\cmidrule(lr){5-7}\cmidrule(lr){8-8}

                                 & \#var                   & \#const                              & time                                       & \#var  & \#const & time    & time \tabularnewline

    \hline
    20                           & 345775                  & 18410                                & 12 s                                     & 10716  & 7197    & $< 1$ s     & $< 1$ s\tabularnewline
    \hline
    50                           & 717901                  & 36410                                & 2 mins                                     & 34613  & 21485   & 2 s    & 7 s\tabularnewline
    \hline
    80                           & 1306597                 & 59810                                & 32 mins                                    & 76281  & 42039   & 13 s   & 30 s\tabularnewline
    \hline
    110                          &                         & -                                    &                                            & 132225 & 66759   & 1 min   & 1 min 30 s\tabularnewline
    \hline
    140                          &                         & -                                    &                                            & 207999 & 96600   & 3 mins  & 3 mins\tabularnewline
    \hline
    170                          &                         & -                                    &                                            & 308355 & 132701  & 28 mins & 7 mins\tabularnewline
  \end{tabular}
  \caption{Comparison of the MIP, the preprocessed MIP and DG according to the number of variables (\#var), the number of constraints (\#const) and the time (on average with 20 different budget values) it takes to solve the Marseille instance with different numbers of wastelands}
  \label{nb_vars_table}
\end{table*}

We see in Table \ref{nb_vars_table} that the computation time of the MIP without preprocessing increases very quickly. It takes more than 30 minutes on average for instances with 80+ wastelands whereas the preprocessed one can be solved in less than 30 minutes with up to 170 wastelands.
This is due to the preprocessing step that significantly reduces the time required to solve the linear relaxation by reducing the number of variables, constraints and non-zero entries of the mixed integer program.
The preprocessed MIP is faster than the greedy algorithm for instances with at most 140 wastelands (the preprocessing step was not used for greedy algorithms).

\medskip

Finally, we run our preprocessing on 400 randomly generated instances from red a model of the landscape around Montreal for hares of $8733$ vertices and $18422$ arcs \cite{FRAMEWORK_ALBERT_2017} to study the impact of the preprocessing on the MIP formulation size. For building these instances, we take 20 connected subgraphs of 500 nodes and for each graph we create 20 instances by randomly picking a percentage of arcs whose probability could be increased from $\pi$ to $\sqrt{\pi}$.


\begin{figure}[H]
  \centering
  \includegraphics[width=1\linewidth]{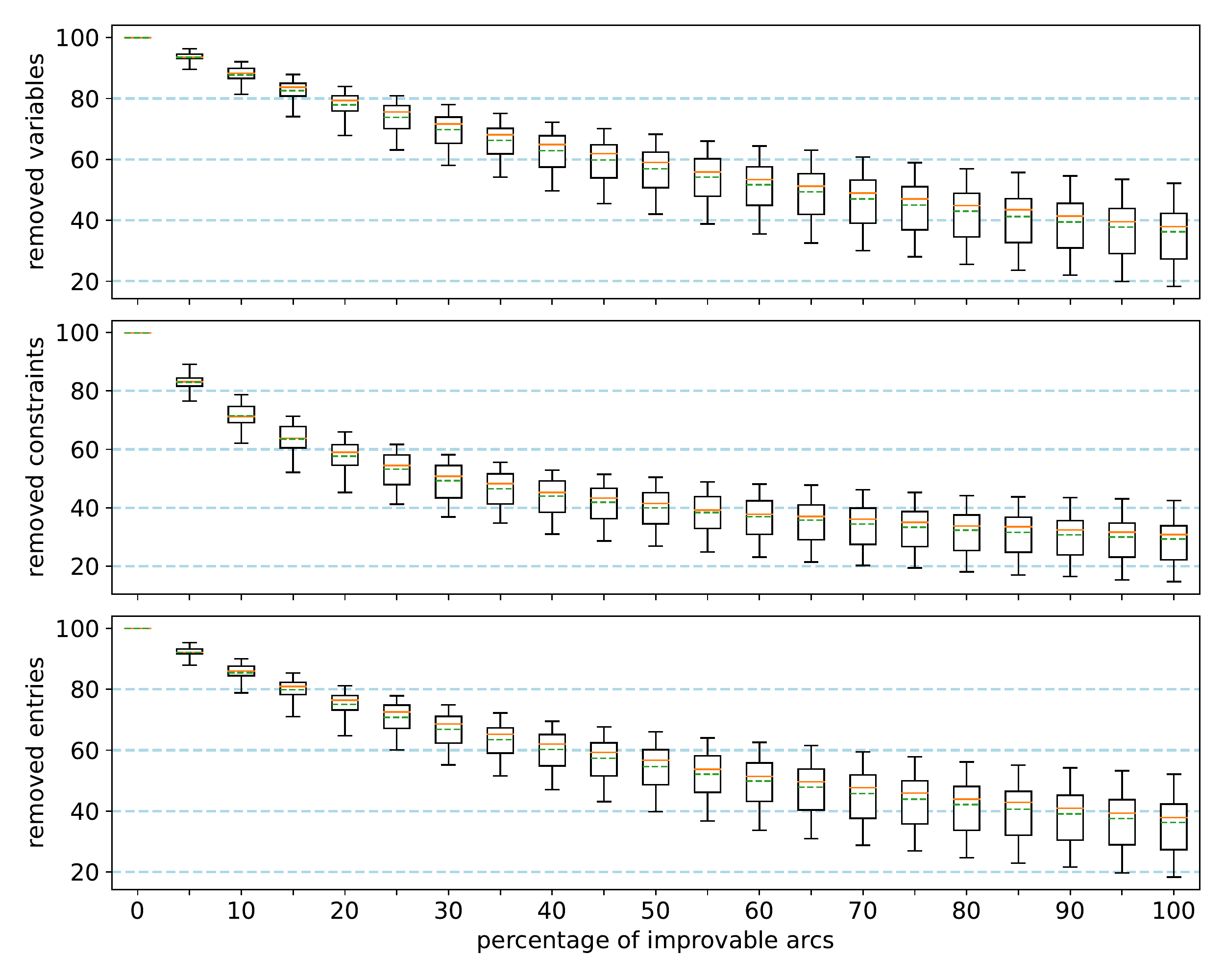}
  \caption{Box plots showing the percentage of constraints, variables and non-zero entries removed by the preprocessing as a function of the percentage of improvable arcs. The red line is the median, the dashed green line is the mean, the box represents the values between the 25th and 75th percentiles and the whiskers the min and max values.}
  \label{preprocess_benefits}
\end{figure}

Figure \ref{preprocess_benefits} shows {a box plot of}the percentage of reduction of the number of constraints, variables and non-zero entries of the MIP formulation with respect to the percentage of arcs that could be improved. The preprocessing removes almost all the elements of the MIP formulation when the number of improvable arcs arrives close to zero. This reduction decreases with the number of arcs that can be improved. When $20\%$ of the arcs can be improved, the preprocessing removes on average $80\%$ and at least $70\%$ of the model's variables, on average $60\%$ and at least $45\%$ of the model's constraints, and on average $75\%$ and at least $65\%$ of the model's non-zero entries. Even when $100\%$ of the arcs can be improved, the preprocessing reduces on average by $35\%$ the model's size. Since the gap between the 25th and 75th percentiles does not exceed $17\%,$ the reduction seems to be robust. These results look consistent with the one of Table \ref{pl_sizes_table}. Indeed, in the case of Aix, about $15\%$ of the arcs can be improved and our preprocessing reduces the number of variables by $70\%$ and the number of constraints by $65\%.$


\subsection{Quality of the solutions} \label{optimality}

{
  \begin{figure}[H]
    \centering
    \includegraphics[width=1\linewidth]{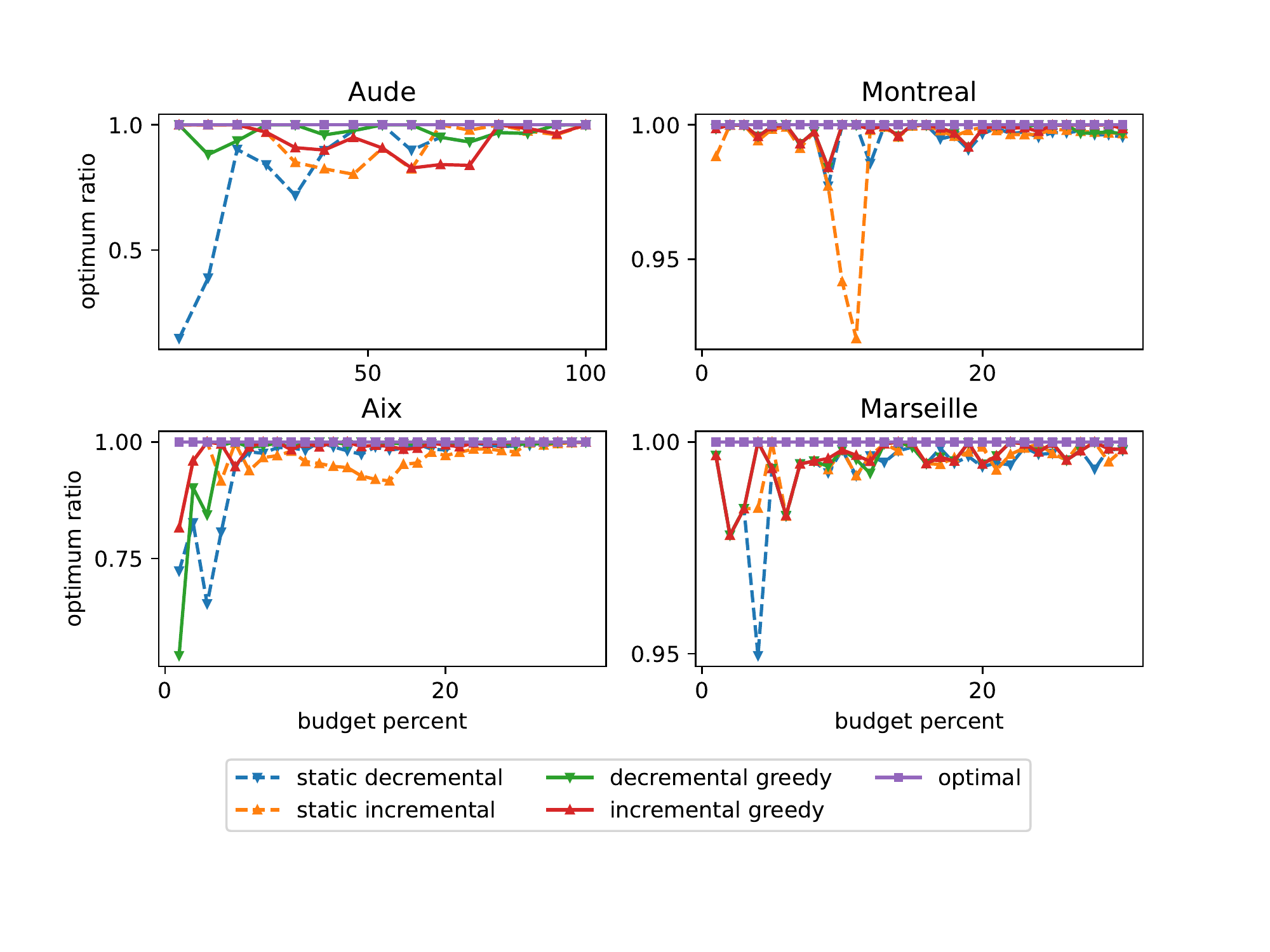}
    \vspace*{-1.7cm}
    \caption{Percentage gain in ECA achieved by the solutions of the different algorithms compared to the optimal solution for different budget values. \label{figure_optimum_ratios}}
  \end{figure}

  \begin{table*}[!h]
    \centering
    \begin{tabular}{crrrrrrrrr}
      \multirow{2}{*}{} & \multicolumn{2}{c}{IL} & \multicolumn{2}{c}{DL} & \multicolumn{2}{c}{IG} & \multicolumn{2}{c}{DG}\tabularnewline
  
      \cmidrule(lr){2-3}\cmidrule(lr){4-5}\cmidrule(lr){6-7}\cmidrule(lr){8-9}
  
                        & min.                   & avg.                   & min.                   & avg.                                  & min.    & avg.    & min.    & avg. \tabularnewline
      \hline
      Aude              & 80.3~\%                & 94~\%                  & 14.7~\%                & 83.9~\%                               & 82.8~\% & 94~\%   & 88.1~\% & 97.2~\%\tabularnewline
      \hline
      Montreal          & 92~\%                  & 99.2~\%                & 97.7~\%                & 99.6~\%                               & 98.4~\% & 99.8~\% & 98.4~\% & 99.8~\%\tabularnewline
      \hline
      Aix               & 81.6~\%                & 96.2~\%                & 65.2~\%                & 95.6~\%                               & 81.6~\% & 98.7~\% & 54.1~\% & 97.4~\%\tabularnewline
      \hline
      Marseille         & 97.8~\%                & 99.5~\%                & 95~\%                  & 99.3~\%                               & 97.8~\% & 99.6~\% & 97.8~\% & 99.6~\%\tabularnewline
      \hline
    \end{tabular}
    \caption{Minimum and average optimilaty ratio for each algorithm and case study.}
    \label{table_stats_algos}
  \end{table*}

  For each of the case studies and each of the four algorithms, there is at least one budget value for which the quality of the solution is significantly lower than the quality of the optimal solution, the greatest departures being observed at lower budget values (Fig. \ref{figure_optimum_ratios}, Table. \ref{table_stats_algos}). Greedy versions of incremental and decremental algorithms perform on average better than their static counterpart (Table. \ref{table_stats_algos}). The minimum and average optimality ratio in the Aude and Aix cases is lower than in the other cases, for all algorithms (Table. \ref{table_stats_algos}). Static and greedy algorithms are generally quite close to the optimal solution ($5\%$ lower on average). However, all algorithms, whether static or greedy, incremental or decremental, provide poor quality solutions for some budget values (Table. \ref{table_stats_algos}).}

\section{Conclusion}

This article introduces a new MIP formulation for BC-ECA-Opt and shows that this formulation allows to optimally solve instances having up to 150 habitat patches while previous formulations, such as those described in \cite{PLNE_XUE_2017} are limited to  30 patches. The preprocessing step reduces significantly the size of the graphs on which a generalized flow has to be computed, thus enabling to scale up to even larger instances.
We showed that this preprocessing step allows to greatly reduce the MIP formulation size and that its benefits increase when the proportion of arcs whose lengths can change decrease. This allows to tackle instances up to 300 habitat patches.

The optimum solutions obtained experimentallly are compared to the ones returned by several greedy algorithms. Interestingly, we found that greedy algorithms perform well in practice despite the arbitrary bad cases we spotted. {Therefore, greedy algorithms remains a reasonable choice for instances too large to be solved optimally by an MIP solver.}
Our next goals will be to experiment our approach on other practical instances of the problem arising from different ecological contexts.

On the theoretical side, we would like to investigate the problem from the point of view of an approximation algorithm. For instance, is it possible to find reasonable assumptions under which greedy algorithms are guaranteed to return a solution whose ECA value is at least a constant fraction of the optimal ECA? If these assumptions are fulfilled by the real instances that we considered, this would explain our experimental observations. Moreover, since a polynomial time approximation scheme has been given in the case of trees \parencite{APPROX-TREES_WU_2014}, it could also be interesting to know for which larger classes of graphs constant factor approximation algorithms for this problem exist. Another interesting question is to determine whether good solutions could be obtained by decomposing geographically the problem, by solving independently a subproblem for each region and then by reassembling the solutions. In this case, a notion of fairness could help to allocate the budget among the regions so that each region can enhance its own {\it internal connectivity} keeping a part of the budget to enhance the connectivity between the regions. Since ECA is based on the equivalence between a landscape and a patch, such a multilevel optimization approach looks promising.

\section{Acknowledgments} 

We are grateful to the referees for a careful reading and many useful comments and suggestions. The research on this paper was supported by R\'egion Sud Provence-Alpes-C\^ote d'Azur, Natural Solutions and the ANR project DISTANCIA (ANR-17-CE40-0015). The Aix case study belongs to the Baum program (Biodiversity Urban Development Morphology) supported by the PUCA, the OFB and the DGALN.
For stimulating exchanges on the case studies, we also thank:
Patrick Bayle, 
Simon Blanchet, 
Andrew Gonzalez, 
Maria Dumitru, 
J\'er\^ome Prunier, 
Bronwyn Rayfield, 
Benoit Romeyer 
and Keoni Saint-P\'e. 

\printbibliography

\end{document}